%% file: draft/draft.tex
\documentclass[12pt]{article}

\usepackage[utf8]{inputenc}
\usepackage[T1]{fontenc}
\usepackage{lmodern}
\usepackage{microtype}

\usepackage{geometry}
\usepackage{setspace}
\usepackage{placeins}
\usepackage{comment}

\usepackage{amsmath,amssymb,amsthm}
\usepackage{dsfont}
\usepackage{bigints}

\usepackage{csquotes}
\usepackage{nameref}

\usepackage{graphicx}
\usepackage{subfigure}
\usepackage{booktabs}
\usepackage{float}
\usepackage{threeparttable}
\usepackage{multirow}
\usepackage{dcolumn}
\usepackage[table,xcdraw]{xcolor}
\usepackage{caption}
\usepackage{tikz}

\usepackage{tikz}
\usepackage{forest}
\usepackage{pdflscape}

\definecolor{linkblue}{RGB}{0,90,150}
\usepackage{xurl}
\usepackage[
  colorlinks=true,
  linkcolor=linkblue,
  citecolor=linkblue,
  urlcolor=linkblue
]{hyperref}

\usepackage[
  backend=biber,
  style=authoryear,
  giveninits=true,
  uniquename=false,
  maxcitenames=2,
  maxbibnames=99,
  date=year,
  doi=false,
  isbn=false,
  eprint=false,
  url=false,
  sorting=nyt
]{biblatex}

\DeclareFieldFormat{note}{\mkbibparens{#1}}

\AtEveryBibitem{%
  \clearfield{month}%
  \clearfield{day}%
  \clearfield{language}%
  \clearfield{issn}%
}
\AtEveryBibitem{\clearfield{urldate}}

\newtheorem{proposition}{Proposition}
\newtheorem{assumption}{Assumption}

\title{
\textbf{Buyer Commitment in Bilateral
Bargaining: The Case of Online
Japanese C2C Market}
\thanks{
I thank Aaron Barkley, Kalyan Chatterjee, Paul Grieco, Anthony Kwasnica, Tai Lam, Nihal Mehta, Karl Schurter, Ran Shorrer, Chloe Tergiman, and Yuta Toyama, as well as participants at APIOC2025, EARIE2025, ESWC2025, IIOC2025, Penn State, and Waseda, for helpful comments.
I thank Mercari, Inc.\ for providing the data.
This work was supported by JST ERATO Grant Number JPMJER2301.
This research is part of the results of Value Exchange Engineering, a joint research project between Mercari R4D Lab and RIISE (Research Institute for an Inclusive Society through Engineering).\\
\textit{Email:} \href{mailto:kankuno@e.u-tokyo.ac.jp}{kankuno@e.u-tokyo.ac.jp}
}
}
\author{Kan Kuno \\ University of Tokyo}
\date{\today}

\begin{document}
\maketitle

% Abstract Section
\begin{abstract}
This paper studies bargaining when buyers can continue searching for alternative sellers while negotiating, which limits their commitment to complete a transaction. Using transaction-level data from a Japanese online marketplace, I document frequent post-agreement nonpurchase and show that buyers who explicitly pledge immediate payment are more likely to have their offers accepted, renege less often, and complete transactions faster. I develop and estimate a dynamic bargaining model with buyer search and limited commitment. Counterfactuals that restrict search during bargaining show that sellers, especially those with higher valuations, benefit from the elimination of delays and walkaways and respond by raising list prices. This reduces buyer welfare by lowering the option value of search and increasing expected list prices. Platform welfare increases modestly because the rise in list prices outweighs the decline in negotiated trade and accepted counteroffer prices. Overall, total welfare increases.
\end{abstract}

\textbf{Keywords:} Bargaining, Search, Platform Design

\textbf{JEL Codes:} C61, D12, D83, L86, D47

% Table of Contents (optional)
% \tableofcontents

%\chapter{Buyer Commitment in Bilateral Bargaining: A Model of Bargaining and Search}

\section{Introduction}

Negotiation between a buyer and a seller often takes place alongside buyer search. In housing markets, buyers typically negotiate with a seller while simultaneously continuing to visit and make offers on alternative properties. Similarly, in car purchases, buyers may solicit and compare quotes from multiple dealers before committing to a transaction. In online marketplaces, where price negotiation has become increasingly common, buyers can both haggle with a given seller and continue searching for alternative listings offering better terms. Despite the prevalence of such environments, existing empirical studies of bargaining typically analyze each bargaining interaction in isolation. As a result, we lack systematic evidence on how the availability of search during negotiation shapes bargaining outcomes in real-world settings.

Search availability during bargaining can undermine buyer commitment. Depending on the bargaining protocol, a buyer may walk away during negotiations or even renege after reaching an agreement if a better outside option becomes available. This has emerged as a practical concern on several online marketplaces, including eBay, Etsy, Facebook Marketplace, and Mercari. For example, sellers on eBay have long reported frustration with buyers who fail to complete purchases after their counteroffers are accepted, resulting in delayed or unpaid transactions. In response, eBay introduced an immediate payment feature that automatically charges the buyer upon offer acceptance (\textcite{ecommercebytes2021ebay}). Despite the use of such commitment-enhancing policies, their welfare implications for buyers, sellers, and platforms remain poorly understood. Understanding these implications requires analyzing not only how individual bargaining outcomes change but also how platform-wide equilibrium objects, such as the distribution of listing prices, respond.

This paper develops and estimates a structural model of complete-information bargaining with search, and quantifies the welfare effects of requiring buyer commitment. In the model, a searching buyer randomly matches with a seller and engages in price negotiation. A key feature is that buyers may continue searching during bargaining, which can generate delay and even reneging after agreement. The model predicts endogenous sorting into commitment status: among buyers who choose to negotiate, higher-valuation buyers are more likely to commit to suspending search, and sellers optimally accept steeper discounts from such committed buyers. Lower-valuation buyers, in contrast, must offer a premium to compensate for their higher risk of reneging.

I use detailed offer level data from the Japanese online marketplace Mercari. I exploit a unique feature of the data that records all textual communication exchanged during price negotiations. Using this information, I find evidence of endogenous sorting into commitment status, reflected in some buyers explicitly pledging immediate payment. These buyers complete transactions with substantially shorter delays and are significantly less likely to renege than buyers who do not make such pledges. Consistent with the model's predictions, sellers are also more willing to accept lower counteroffers from buyers who signal commitment.

I then take the model to the data, focusing on used 128 GB iPhone 7 listings. I provide constructive identification arguments for the key model primitives, including buyer and seller valuations, buyer and seller arrival rates, and the buyer search cost. These identification results exploit the optimality of counteroffers together with observed buyer choice probabilities across sellers. At the benchmark value, the estimated flow search cost is 820 yen per day, which corresponds to about 7.4\% of the median buyer valuation.

In my counterfactual, I simulate the introduction of a buyer commitment policy by making search during bargaining prohibitively costly. The results show that a full commitment policy increases overall welfare by 137.7 yen per listing and increases platform welfare by 27.7 yen per listing. Sellers benefit on average, with gains increasing in seller valuation, because the elimination of search-induced delays that were concentrated among high-valuation sellers raises continuation values and lowers the opportunity cost of raising list prices and being declined. Buyers face higher expected list prices ex ante and lose the option value of searching during bargaining, which reduces buyer welfare across the distribution, with larger losses for higher-valuation buyers who were more likely to purchase at list prices in the baseline. The platform gains modestly because higher list prices raise commission revenue on posted-price transactions, and this more than offsets the decline in negotiated trade and accepted counteroffer prices.

\section{Literature}

This paper contributes to three strands of the literature. The first is the empirical analysis of bargaining. Recent work in this area makes use of the availability of detailed offer level data, particularly from online marketplaces such as eBay. These data allow researchers to study bargaining behavior at a fine level of detail, including comparisons of theoretical models (\textcite{backus2020sequential}), measurement of inefficiencies and bargaining breakdowns (\textcite{larsen2021efficiency}; \textcite{freyberger2025well}), the role of verbal communication (\textcite{backusCommunicationBargainingBreakdown2020}), fairness considerations (\textcite{keniston2021fairness}), and the effects of timing and delay in negotiation (\textcite{fong2025effects}). While most existing studies analyze bargaining interactions in isolation, this paper adopts a structural approach that embeds bargaining within a competitive market environment and explicitly models inter-bargaining dynamics. This perspective highlights the importance of equilibrium effects for interpreting observed offers and bargaining outcomes.\footnote{Methodologically, this paper is related to empirical work that studies dynamic strategic interaction on online marketplaces, including \textcite{adachiCompetitionDynamicAuction2016}, \textcite{hendricksDynamicsEfficiencyDecentralized2018}, \textcite{backus2025dynamic}, and \textcite{bodoh-creedHowEfficientAre2021}.}

Secondly, this paper contributes to the empirical literature on platform design. Existing studies examine a wide range of design features, including search algorithms (\textcite{fradkinSearchMatchingRole2017}; \textcite{dinerstein2018consumer}), pricing mechanisms (\textcite{einavAuctionsPostedPrices2018b}), reputation systems (\textcite{resnickValueReputationEBay2006}; \textcite{noskoLimitsReputationPlatform2015}), and the role of user profiles (\textcite{edelmanRacialDiscriminationSharing2017}). In contrast, empirical work on the design of bargaining protocols remains limited. One notable exception is \textcite{zhangMeetMeHalfway2021}, who studies bargaining features on the Chinese platform Taobao and simulates the effects of a platform-wide ban on price negotiations. However, due to the absence of detailed offer level data, the bargaining component of their structural model is necessarily reduced form, which limits its ability to evaluate changes in the bargaining protocol beyond an outright ban. This paper instead focuses on a more practically prevalent design issue, namely unpaid items arising from bargaining interactions, and uses a structural approach to evaluate the consequences of alternative bargaining rules for buyers, sellers, and platform revenue.

Lastly, this paper contributes to the literature examining the interaction between search and bargaining. Much of the existing work is theoretical. For example, \textcite{chikteRoleExternalSearch1987} study a model without recall, implying that past offers cannot be revisited and bargaining delays do not arise, while \textcite{leeBargainingSearchRecall1994} and \textcite{chatterjeeBargainingSearchIncomplete1998} allow for recallable offers and show that buyer search can generate delays. Empirically, \textcite{allen2019search} study a negotiated price market in which consumers can search for competing offers, modeling multilateral bargaining as an auction conditional on search to quantify welfare losses from search frictions and price discrimination. \textcite{barkley2025haggle} study the residential housing market as a search environment with both auctions and price negotiations under two-sided incomplete information, modeling negotiation through a direct mechanism design approach. In contrast, my work abstracts from multilateral bargaining and two-sided incomplete information but explicitly models the bargaining protocol itself, allowing counterfactual analysis of practically relevant platform rules governing commitment, delay, and enforcement.

\section{Institution}

Mercari, Japan's largest online C2C marketplace for second-hand products, facilitates over 100 million transactions annually. The platform enables users to buy and sell a wide range of products, from handicrafts to automobiles. A seller creates a listing that includes photos, a description, and a chosen list price. Buyers may immediately purchase the product at the listed price on a first-come, first-served basis, or attempt to negotiate a price reduction. During my sample period from 2020 to 2021, negotiation was not a formal platform feature but instead took place publicly in the comment section of each listing. Figure \ref{fig:listing_ui} presents an illustrative listing page, and Table \ref{tab:comment_negotiation_example} provides a stylized example of a comment-based negotiation. When a seller and buyer agree on a price reduction in the comments, the seller updates the listing price accordingly. The buyer may then purchase the item at the revised price, but no mechanism binds the buyer to complete the transaction. If another buyer arrives, that buyer can purchase the item at the updated price, provided the seller does not revert the price after nonpurchase by the original buyer. Similar to eBay, post-agreement buyer reneging is a widespread issue on Mercari (\textcite{allabout2023mercari}).

% Preamble requirements:
% \usepackage{graphicx}
% \usepackage{tikz}
% \usetikzlibrary{arrows.meta,positioning}

% Fix: your TikZ/PGF version likely doesn't know the 'Latex' tip.
% Use the built-in arrow tip 'stealth' (or '->') and avoid arrows.meta.

% Preamble requirements:
% \usepackage{graphicx}
% \usepackage{tikz}
% \usetikzlibrary{positioning} % (optional)

% Preamble:
% \usepackage{graphicx}
% \usepackage{tikz}
% \usetikzlibrary{positioning}

\begin{figure}[!htbp]
\centering
\begin{tikzpicture}

\node[anchor=south west, inner sep=0] (image) at (0,0)
  {\includegraphics[width=0.50\linewidth]{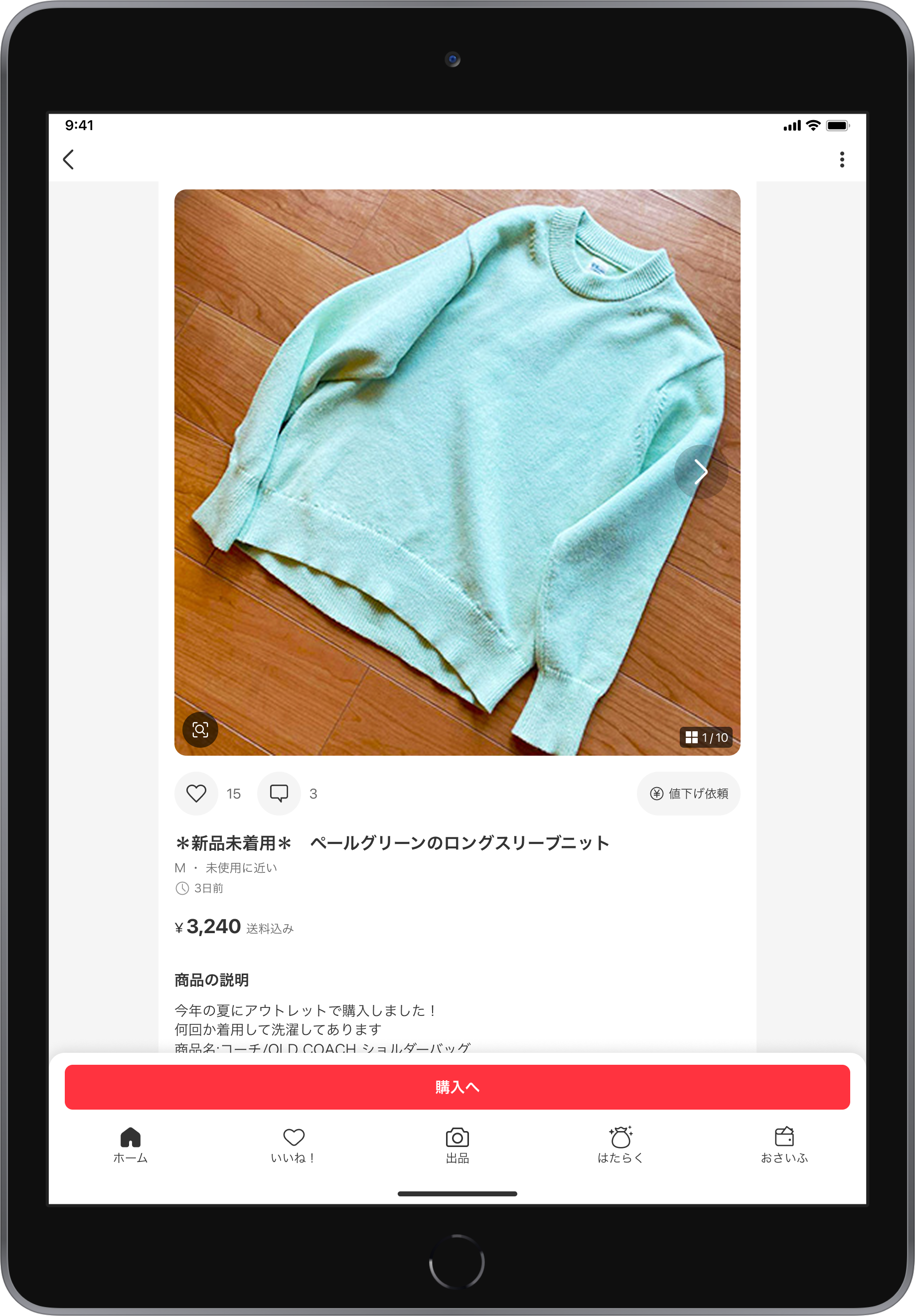}};

\begin{scope}[x={(image.south east)}, y={(image.north west)}]

\tikzset{
  callout/.style={
    draw,
    rounded corners,
    fill=white,
    align=left,
    inner sep=2.5pt,
    font=\footnotesize
  },
  arrow/.style={->, line width=0.7pt}
}

% ---- LEFT SIDE ----

\node[callout, anchor=west] (title) at (1.02,0.35) {Item title};
\draw[arrow] (title.west) -- (0.68,0.36);

\node[callout, anchor=east] (price) at (-0.02,0.3) {List price};
\draw[arrow] (price.east) -- (0.19,0.3);

% ---- RIGHT SIDE ----

\node[callout, anchor=east] (comments) at (-0.02,0.43) {Number of likes and comments};
\draw[arrow] (comments.east) -- (0.20,0.40);

\node[callout, anchor=east] (desc) at (-0.02,0.20) {Item description};
\draw[arrow] (desc.east) -- (0.18,0.23);

\node[callout, anchor=west] (purchase) at (1.02,0.18) {Purchase button};
\draw[arrow] (purchase.west) -- (0.90,0.17);

\end{scope}
\end{tikzpicture}

\caption{Annotated listing interface with English labels.}
\label{fig:listing_ui}

\begin{flushleft}
\footnotesize
\emph{Note:} This figure shows an example listing page in the Mercari iPad application. The image is provided by Mercari and does not correspond to any particular listing. A small on-screen indication of price negotiation appears below the image on the right. During my sample period from 2020 to 2021, negotiation was not a formal platform feature and occurred through the public comment section, although comment-based negotiation remained possible even after formalization.
\end{flushleft}

\end{figure}

\begin{table}[!htbp]
\centering
\begin{threeparttable}
\caption{Illustrative negotiation via the public comment section (stylized example)}
\label{tab:comment_negotiation_example}
\begin{tabular}{@{}clp{0.72\linewidth}@{}}
\toprule
Timestamp & User & Comment \\
\midrule
2020 Nov 17 10:12 & Buyer A &
Hello. I am considering purchasing. If the device is unlocked and there is no SIM restriction, would you be willing to sell it for 12{,}000 yen? \\
2020 Nov 17 10:18 & Seller &
Thank you for your message. Yes, I can accept 12{,}000 yen. The device is unlocked and has no SIM restriction. \\
2020 Nov 17 10:25 & Buyer A &
Thank you. \\
2020 Nov 17 10:31 & Seller &
I have updated the listed price to 12{,}000 yen. \\
\bottomrule
\end{tabular}
\begin{tablenotes}[flushleft]
\footnotesize
\item \textit{Notes:} This table is a stylized illustration of a typical negotiation in a public comment section. It does not relate to any particular listing.
\end{tablenotes}
\end{threeparttable}
\end{table}

\section{Model}\label{sec:model}

\subsection{Setup}

Time is continuous. Sellers and buyers have valuations for a homogeneous good distributed according to $F_S$ and $F_B$, respectively, both absolutely continuous with respect to Lebesgue measure. Buyers incur a common search cost $c$ per day, and all agents share a common continuous discount rate $r$. Each agent's valuation, the search cost, and the discount rate are common knowledge. There are $N_S$ sellers and $N_B$ buyers in the market, and the platform collects a fraction $t=0.10$ of each realized sale. The sequence of events and payoffs is depicted in Figure~\ref{fig:sequence_tree}. Consider a seller with valuation $s$, who first chooses a list price $p^0$. Buyers arrive to this seller according to a Poisson process with rate $\lambda_S$, and each arriving buyer's valuation is drawn from $F_B$. From the buyer's perspective, sellers arrive according to a Poisson process with rate $\lambda_B$. Upon matching with seller $s$, a buyer with valuation $b$ chooses one of three actions: purchase at the list price ($A$), make a counteroffer $p^1$ ($C$), or decline ($D$). If the buyer chooses $A$, trade occurs immediately at price $p^0$ between buyer $b$ and seller $s$. When a trade occurs at price $p$ between buyer $b$ and seller $s$, the buyer receives payoff $b-p$ and the seller receives payoff $(1-t)p-s$, and both agents exit the market. If the buyer chooses $D$, the match dissolves. The seller receives continuation value $U_S(s)$. The buyer resumes search after an exogenously given delay governed by a Poisson process with rate $\lambda_R$ and then obtains continuation value $U_B(b)$. I refer to $\lambda_R$ as the \textit{response rate} and assume $\lambda_R > \max{\{\lambda_B, \lambda_S\}}$.

If the buyer chooses $C$, the seller's response, either to accept the offer $p^1$ ($A$) or to decline it ($D$), arrives at rate $\lambda_R$. If the seller chooses $D$, the match is dissolved, with the seller receiving $U_S(s)$ and the buyer immediately resuming search with continuation value $U_B(b)$. If the seller chooses $A$, the buyer's opportunity to complete the purchase arrives at rate $\lambda_R$. At the time of choosing action $C$, the buyer also decides whether to continue searching for another seller ($S$) or not ($N$). If she chooses $S$, she continues to search and faces the arrival of a new seller with valuation $s' \sim F_S$ at Poisson rate $\lambda_B$, incurring flow search cost $c$ until arrival. In this case, if the original seller chooses $A$, the buyer retains the option to purchase from the original seller at price $p^1$ until the arrival of the new seller.\footnote{For simplicity, I assume that the arrival of the new seller always occurs after the seller's response and the buyer's purchase opportunity.} Upon the arrival of the new seller, the buyer either chooses to purchase from the original seller ($P$), in which case trade occurs at price $p^1$ between buyer $b$ and seller $s$, or to walk away to the new seller ($W$). If the buyer chooses $W$, her match partner becomes the new seller $s'$. Even in this case, with exogenously given probability $\kappa$, the original seller trades at price $p^1$ with a different buyer whose behavior lies outside the model. This depicts the reality that, as anecdotally reported on the platform, once the discount has been reflected, the product can be purchased by another buyer if the original offer maker does not act promptly. Otherwise, the seller becomes unmatched and receives continuation value $U_S(s)$.

From a match between a seller $s$ and a buyer $b$, the seller receives a continuation value $V_S(s,b)$ and the buyer receives $V_B(s,b)$. The continuation value of an unmatched seller with valuation $s$ is given by
\[
U_S(s) = \frac{\lambda_S}{\lambda_S + r} \int V_S(s,b)\, dF_B(b),
\]
while the continuation value of an unmatched buyer with valuation $b$ is
\[
U_B(b) = \frac{\lambda_B}{\lambda_B + r} \int V_B(s,b)\, dF_S(s) - \frac{c}{\lambda_B + r}.
\]

\input{figures/sequence_tree}

\subsection{Equilibrium}

I restrict attention to a subgame-perfect Nash equilibrium that is both stationary and in steady-state. Stationarity refers to the strategy profile: a player's equilibrium action depends only on its own valuation and, if matched, on the valuation of the currently matched counterpart. steady-state refers to the market environment: the valuation distributions \(F_S\) and \(F_B\), as well as the measure of active sellers and buyers \(N_S\) and \(N_B\), remain constant over time. In steady-state, market entry exactly offsets exit. Whenever a seller exits the market upon a completed trade, a new seller with the same valuation enters the market, and analogously for buyers. Under the Poisson meeting process, this implies a flow-balance condition equating total meeting rates on the two sides of the market:
\begin{equation}\label{eq:lambda_b_steady_state}
\lambda_B N_B = \lambda_S N_S    
\end{equation}
That is, the total rate at which sellers meet buyers equals the total rate at which buyers meet sellers.
In addition, I impose the following condition:
\begin{assumption}[Monotonicity and Lipschitz]\label{ass:monotonicity}
The buyer continuation value $V_B(s,b)$ is decreasing in $s$.
Moreover, $V_B(s,\cdot)$ and $V_S(\cdot,b)$ are Lipschitz continuous with Lipschitz constant one.
\end{assumption}
\begin{assumption}[Indifference]\label{ass:indifference}
Whenever a seller is indifferent between choosing \(A\) and \(D\) in response to a buyer's counteroffer \(p^1\), the seller chooses \(A\).
\end{assumption}

Given these assumptions, I solve for the equilibrium using backward induction. 
A buyer \(b\) whose \(p^1\) was accepted and matched to a new seller \(s'\) walks away if, and only if, \(V_B(s', b) \geq b - p^1\).
Given the monotonicity Assumption \ref{ass:monotonicity}, this can be equivalently stated as \(s' \leq s^*(b, p^1)\) where \(s^*\) is implicitly defined as \(V_B(s^*, b) = b - p^1\). 
As such, before matching with the new seller, the probability that buyer \(b\) walks away is \(F_S(s^*(b,p^1))\).

The seller accepts \(p^1\) when his expected payoff from accepting it weakly exceeds that from declining it.
In case the buyer chooses to search, such \(p^1\) satisfies
\begin{equation}\label{eq:seller_accept_noncommit}
\frac{\lambda_B}{\lambda_B + r}
\Bigl(
    (1 - \kappa) F_S\!\bigl(s^*(b,p^1)\bigr)\, U_S(s)
    + \bigl[1 - (1 - \kappa) F_S\!\bigl(s^*(b,p^1)\bigr)\bigr]\,
      \bigl((1-t)p^1 - s\bigr)
\Bigr)
\;\ge\;
U_S(s).
\end{equation}
Notice that the probability of a walkaway is scaled down by \(1 - \kappa\), reflecting the fact that even when the buyer walks away, the seller may still be able to trade at price \(p^1\) with another buyer, with an exogenously given probability \(\kappa\).
Let $p^1_S(s,b)$ denote the smallest $p^1$ that satisfies the seller acceptance condition \eqref{eq:seller_accept_noncommit} with equality. 
Although no closed form expression is available for $p^1_S(s,b)$, the following proposition establishes its existence provided that the continuation value is positive.
\begin{proposition}\label{prop:p1s_existence}
Suppose $U_S(s) > 0$. Then for any $s$ and $b$, there exists $p^1_S(s,b)$ satisfying \eqref{eq:seller_accept_noncommit} with equality.
\end{proposition}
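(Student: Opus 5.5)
The plan is an intermediate value argument in $p^1$ applied to the function
\[
G(p^1) \;=\; \frac{\lambda_B}{\lambda_B+r}\Bigl((1-\kappa)F_S\bigl(s^*(b,p^1)\bigr)U_S(s) + \bigl[1-(1-\kappa)F_S\bigl(s^*(b,p^1)\bigr)\bigr]\bigl((1-t)p^1-s\bigr)\Bigr) - U_S(s),
\]
with $s,b$ fixed. I will show that $G$ is continuous, negative somewhere and nonnegative somewhere. The set $\{p^1: G(p^1)=0\}$ is then nonempty. It is closed by continuity and bounded below, since $G<0$ at low prices as shown below. So its smallest element $p^1_S(s,b)$ exists.

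\textbf{Continuity.} Write $\pi(p^1)=F_S(s^*(b,p^1))$. Since $F_S$ is absolutely continuous, it is continuous. It therefore suffices that $s^*(b,\cdot)$ is continuous wherever it is interior, and that $\pi$ is well defined and constant (equal to $0$ or $1$) when $s^*$ hits the boundary of the support.

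By Assumption \ref{ass:monotonicity}, $V_B(\cdot,b)$ is decreasing in $s$. Together with Lipschitz continuity of $V_B(s,\cdot)$ and the analogous regularity in $s$ used to define $s^*$, the equation $V_B(s^*,b)=b-p^1$ defines $s^*$ as a generalized inverse. Its value is increasing in $p^1$, and it is continuous wherever $V_B(\cdot,b)$ is strictly decreasing. Where $V_B(\cdot,b)$ is flat, $s^*$ may jump. I will handle this by setting $s^*(b,p^1)=\sup\{s': V_B(s',b)\ge b-p^1\}$, consistent with the walk-away rule, so that $\pi$ is monotone. The main difficulty I expect is this step: showing that $\pi$ has no jumps that make $G$ skip over zero. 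If jumps cannot be ruled out directly, I would fall back on monotonicity. The first term of $G$ enters with weight $U_S(s)$ and the second with weight $(1-t)p^1-s$. Once $(1-t)p^1-s\ge U_S(s)$, an increase in $\pi$ raises the probability mass placed on the smaller payoff. This is the region where a jump could matter.

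\textbf{Sign at the endpoints.} At $p^1=(s+U_S(s))/(1-t)$, we have $(1-t)p^1-s=U_S(s)$. The bracket then equals $U_S(s)$, so $G=\bigl(\frac{\lambda_B}{\lambda_B+r}-1\bigr)U_S(s)<0$. This uses $U_S(s)>0$, which is exactly where the hypothesis enters. For lower $p^1$, $G$ is even more negative, so $G$ is negative on a left half-line.

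For large $p^1$, $b-p^1$ becomes very negative. Because buyer continuation values are bounded below by the buyer's option of declining and searching, $V_B(s',b)\ge b-p^1$ holds for all $s'$. Hence $\pi\to1$, and the bracket tends to $(1-\kappa)U_S(s)+\kappa((1-t)p^1-s)$.

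If $\kappa>0$, this tends to $+\infty$, so $G>0$ for $p^1$ large. If $\kappa=0$, the bracket is not unbounded and a separate argument is needed. I would then use that $\pi(p^1)<1$ on an intermediate range and verify that the factor $[1-\pi](1-t)p^1$ can be made large enough. If that fails, I would note that the statement implicitly requires $\kappa>0$ or a support condition on $F_S$, and flag this.

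Finally, continuity together with the sign change gives a zero, and the smallest zero defines $p^1_S(s,b)$.
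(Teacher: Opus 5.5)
Your proposal is correct and is essentially the paper's proof. The paper applies the intermediate value theorem to the same $G$, with the same lower point $\underline{p}^1=(U_S(s)+s)/(1-t)$, where $G=(\delta-1)U_S(s)<0$ and $\delta=\lambda_B/(\lambda_B+r)$; for the upper point it uses $1-(1-\kappa)F_S\ge\kappa$ to get $G(p^1)\ge\delta\kappa\bigl((1-t)p^1-s\bigr)-U_S(s)$ directly, so no limit $\pi\to 1$ is needed, and it relies on $\kappa>0$ exactly as you flagged.

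The paper simply asserts that $s^*(b,\cdot)$ is continuous, by reading Assumption~\ref{ass:monotonicity} as continuity and strict monotonicity of $V_B(\cdot,b)$, so your flat-region worry is a looseness in the paper rather than a gap in your argument. Your fallback closes it: under the tie-break ``walk away iff $V_B(s',b)\ge b-p^1$'' the map $p^1\mapsto\pi(p^1)$ is right-continuous, and for $(1-t)p^1-s>U_S(s)$ its upward jumps only push $G$ down. Hence at $q=\inf\{p^1\ge\underline{p}^1: G(p^1)\ge 0\}$ you get $0\le G(q)\le G(q^-)\le 0$, which also delivers the smallest zero.
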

\begin{proof}
See Appendix~\ref{proof:p1s_existence}.
\end{proof}

In case the buyer chooses not to search, \(p^1\) should satisfy
\begin{equation}\label{eq:seller_accept_commit}
\frac{\lambda_R}{\lambda_R + r}
\Bigl((1-t)p^1 - s\Bigr)
\;\ge\;
U_S(s)
\end{equation}
Let \(p^1_N(s)\) denote the \(p^1\) that satisfies above with equality, which is:
\begin{equation}\label{eq:committed_offer}
    p^1_N(s)=\frac{1}{1-t}\left[s+\frac{\lambda_R+r}{\lambda_R}\,U_S(s)\right]
\end{equation}
The following is the first key prediction of the model:
\begin{proposition}[Commitment premium]\label{prop:commit_premium}
\(p^1_S(s, b) > p^1_N(s)\).
\end{proposition}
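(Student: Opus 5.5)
Since $p^1_S(s,b)$ has no closed form, I will not compare the two prices directly. Instead I will evaluate the left-hand side of the search acceptance condition \eqref{eq:seller_accept_noncommit} at $p^1=p^1_N(s)$ and show that it falls strictly short of $U_S(s)$. Then I will argue that every $p^1 \le p^1_N(s)$ also fails the condition. Since $p^1_S(s,b)$ satisfies the condition with equality, this forces $p^1_S(s,b)>p^1_N(s)$. Throughout I work with $U_S(s)>0$, the case in which Proposition~\ref{prop:p1s_existence} guarantees that $p^1_S$ exists.

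\emph{Step 1 (the threshold itself).} Write $\pi(p^1)=(1-t)p^1-s$ and $q(p^1)=(1-\kappa)F_S(s^*(b,p^1))\in[0,1]$. The left-hand side of \eqref{eq:seller_accept_noncommit} is
\[
L(p^1)=\frac{\lambda_B}{\lambda_B+r}\bigl(q\,U_S(s)+(1-q)\,\pi(p^1)\bigr).
\]
By \eqref{eq:committed_offer}, $\pi(p^1_N(s))=\frac{\lambda_R+r}{\lambda_R}U_S(s)$. Substituting,
\[
L(p^1_N)=\frac{\lambda_B}{\lambda_B+r}\,U_S(s)\Bigl(q+(1-q)\tfrac{\lambda_R+r}{\lambda_R}\Bigr)\le \frac{\lambda_B}{\lambda_B+r}\cdot\frac{\lambda_R+r}{\lambda_R}\,U_S(s),
\]
where the inequality holds because $q\ge 0$ and $\tfrac{\lambda_R+r}{\lambda_R}>1$. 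The function $x\mapsto x/(x+r)$ is strictly increasing and $\lambda_R>\lambda_B$, so $\frac{\lambda_B}{\lambda_B+r}<\frac{\lambda_R}{\lambda_R+r}$. Hence the coefficient above is strictly less than one, and $L(p^1_N)<U_S(s)$ because $U_S(s)>0$.

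\emph{Step 2 (all lower prices fail).} I need $L(p^1)<U_S(s)$ for every $p^1\le p^1_N(s)$, not just at $p^1_N$ itself. The difficulty is that $q$ depends on $p^1$, so $L$ need not be monotone. I avoid monotonicity altogether by using a uniform bound. For $p^1\le p^1_N$, $\pi(p^1)\le \pi(p^1_N)$, and so
\[
qU_S+(1-q)\pi(p^1)\le \max\{U_S,\pi(p^1)\}\le \max\{U_S,\pi(p^1_N)\}=\tfrac{\lambda_R+r}{\lambda_R}U_S.
\]
This holds whatever value $q$ takes. The same chain as in Step 1 then gives $L(p^1)<U_S(s)$ for all $p^1\le p^1_N(s)$.

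\emph{Step 3 (conclusion).} The price $p^1_S(s,b)$ satisfies $L(p^1_S)=U_S(s)$ by Proposition~\ref{prop:p1s_existence}. Step 2 shows that no $p^1\le p^1_N(s)$ can satisfy this equality, so $p^1_S(s,b)>p^1_N(s)$.

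The step I expected to be the main obstacle was Step 2, because $s^*(b,p^1)$ moves with $p^1$ and that rules out a monotonicity argument. The bound through $\max\{U_S,\pi\}$ sidesteps the issue, so the only substantive ingredient is the ordering $\lambda_B<\lambda_R$ from the response-rate assumption. In words: with a searching buyer, the seller is paid only at the arrival of the next seller, which comes at rate $\lambda_B$; with a committed buyer, he is paid at rate $\lambda_R$. Slower payment on top of walkaway risk means a larger discount is needed to compensate him, which is why a searching buyer must offer a higher price.
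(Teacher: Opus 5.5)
Your proof is correct and is essentially the paper's argument run in contrapositive form. Both rest on bounding the convex combination $qU_S+(1-q)\pi$ by its larger endpoint and on $\lambda_B/(\lambda_B+r)<\lambda_R/(\lambda_R+r)$; the paper applies this directly at $p^1_S$, where the equality forces $\pi(p^1_S)>U_S$, so the monotonicity worry in your Step 2 never arises there, whereas you apply it at every $p^1\le p^1_N$. Your explicit assumption $U_S(s)>0$, which the paper uses only implicitly, is exactly what the strict inequality needs.
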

\begin{proof}
    See Appendix \ref{proof:commit_premium}.
\end{proof}
\noindent The intuition for this is that a buyer who chooses to search needs to compensate for its own risk of walking away for the seller to accept her counteroffer.

I refer to the buyer action of choosing \(C\), \(S\), and submitting a counteroffer \(p^1_S(s,b)\) as making a \textit{non-committed} counteroffer, labeled \(CS\). Likewise, choosing \(C\), \(N\), and submitting a counteroffer \(p^1_N(s)\) is referred to as making a \textit{committed} counteroffer, labeled \(CN\). In either case, submitting any counteroffer above the stated amount is suboptimal for the buyer given Assumption \ref{ass:indifference}, and submitting any amount below leads the seller to choose \(D\) after a stochastic delay governed by rate \(\lambda_R\), in which case the buyer receives continuation value \(U_B(b)\). This payoff coincides with that obtained when the buyer chooses \(D\) upon matching. Accordingly, I relabel the outcome in which the buyer chooses \(C\) and is declined by the seller, regardless of the search choice, as action \(D\). Given this relabeling, buyer actions upon matching with seller \(s\) can be summarized as one of \(\{A, CN, CS, D\}\). Let \(V_B(s,b;\chi,p^0)\) denote buyer \(b\)'s continuation value conditional on matching with seller \(s\) posting initial offer \(p^0\) and taking action \(\chi \in \{A, CN, CS, D\}\). Then \(V_B(s,b;\chi,p^0)\) is given by:
\begin{equation}\label{eq:vbchi}
V_B(s,b;\chi,p^0)
=
\begin{cases}
b - p^0
& \text{if } \chi = A, \\[6pt]

\displaystyle
\left(\frac{\lambda_R}{\lambda_R + r}\right)^2
\bigl(b - p^1_N(s)\bigr)
& \text{if } \chi = CN, \\[10pt]

\displaystyle
\frac{\lambda_B}{\lambda_B + r}
\int
\max\!\left\{
b - p^1_S(s,b),
\;
V_B(s',b)
\right\}
\, dF_S(s')
-
\frac{c}{\lambda_B + r}
& \text{if } \chi = CS, \\[10pt]

\displaystyle
\frac{\lambda_R}{\lambda_R + r}\, U_B(b)
& \text{if } \chi = D.
\end{cases}    
\end{equation}

I denote \(\chi_B(s,b) \in \{A, CN, CS, D\}\) as the optimal action buyer \(b\) takes upon match with seller \(s\).
To establish the second main prediction of the model, I impose a regularity condition, which ensures that the non-committed counteroffer $p^1_S(s,b)$ does not decrease too steeply with the buyer valuation $b$ (Assumption \ref{ass:regularity_wow} in Appendix \ref{sec:regularity}). Then the following proposition shows that the model gives rise to an endogenous self selection into commitment status:
\begin{proposition}[Cutoff representation of buyer actions]\label{prop:partition}
Fix $s$. Suppose that $\mathcal{M}(s)\subseteq\{D,CS,CN,A\}$ is nonempty and that the regularity condition of Assumption \ref{ass:regularity_wow} holds. Then there exist cutoff values
\[
-\infty \le \tau^{CD}(s) \le \tau^{NS}(s) \le \tau^{AC}(s) \le +\infty
\]
such that the buyer's optimal action can be represented as:
\[
\chi_B(s,b)
=
\begin{cases}
D,  & b \le \tau^{CD}(s),\\
CS, & \tau^{CD}(s) < b \le \tau^{NS}(s),\\
CN, & \tau^{NS}(s) < b \le \tau^{AC}(s),\\
A,  & b > \tau^{AC}(s),
\end{cases}
\]
where cutoffs may coincide (so some regions may be empty) depending on $\mathcal{M}(s)$.
\end{proposition}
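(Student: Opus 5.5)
The plan is a single-crossing argument on the action-specific values $V_B(s,b;\chi,p^0)$ in \eqref{eq:vbchi}, viewed as functions of $b$ with $s$ and $p^0$ fixed. I will show that, as $b$ increases, the slopes in $b$ are ordered as
\[
\partial_b V_B(\cdot;D) \le \partial_b V_B(\cdot;CS) \le \partial_b V_B(\cdot;CN) \le \partial_b V_B(\cdot;A)=1,
\]
almost everywhere. All four maps are Lipschitz by Assumption~\ref{ass:monotonicity} and the explicit formulas, so they are differentiable almost everywhere. The order of the slopes implies that each pairwise difference $V_B(\cdot;\chi')-V_B(\cdot;\chi)$, with $\chi'$ later in the order $D,CS,CN,A$, is nondecreasing in $b$. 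Hence the set of $b$ where $\chi'$ weakly beats $\chi$ is an upper interval. Intersecting these upper and lower intervals gives that the region where each action is optimal is an interval, and that the intervals appear in the order $D,CS,CN,A$. Restricting to the nonempty set $\mathcal{M}(s)$ of available or relevant actions only removes some intervals, so the cutoffs may coincide. Ties at cutoffs are broken as stated, with weak inequalities on the left.

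\textbf{Slope bounds.} The endpoint cases are immediate.
\begin{itemize}
\item For $A$ the slope is $1$.
\item For $CN$ the slope is $(\lambda_R/(\lambda_R+r))^2<1$, since $p^1_N(s)$ in \eqref{eq:committed_offer} does not depend on $b$.
\item For $D$ the slope is $\frac{\lambda_R}{\lambda_R+r}U_B'(b)$. Here $U_B'(b)\le \frac{\lambda_B}{\lambda_B+r}\sup|\partial_b V_B|\le \frac{\lambda_B}{\lambda_B+r}$ by Assumption~\ref{ass:monotonicity}. So the slope of $D$ is at most $\frac{\lambda_R}{\lambda_R+r}\cdot\frac{\lambda_B}{\lambda_B+r}$.
\item For $CS$, differentiate the integral of the maximum (an envelope-type argument) to get
\[
\partial_b V_B(\cdot;CS)=\frac{\lambda_B}{\lambda_B+r}\Bigl[F_S(s^*)\,\partial_b V_B(s',b)\big|_{\text{avg}} + \bigl(1-F_S(s^*)\bigr)\bigl(1-\partial_b p^1_S(s,b)\bigr)\Bigr],
\]
where $s^*=s^*(b,p^1_S(s,b))$ and the first term averages over the walk-away region $s'\le s^*$.
\end{itemize}

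\textbf{Middle comparisons.} To get $CS\le CN$, I use $\lambda_R>\lambda_B$, which gives $\frac{\lambda_B}{\lambda_B+r}<\left(\frac{\lambda_R}{\lambda_R+r}\right)^2$ whenever $\lambda_R$ is large relative to $\lambda_B$. Where this fails, I need the regularity condition of Assumption~\ref{ass:regularity_wow}, which bounds $-\partial_b p^1_S$ from above so that the bracket is controlled. I expect Assumption~\ref{ass:regularity_wow} is formulated precisely to deliver $\partial_b V_B(\cdot;CS)\le(\lambda_R/(\lambda_R+r))^2$. To get $D\le CS$, I note that in the walk-away region the $CS$ value is the continuation from the new match. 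This is the same object that enters $U_B$, but without the extra $\lambda_R/(\lambda_R+r)$ discount and with the option of buying at $p^1_S$ added, and that option has slope at least as large, given regularity.

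\textbf{Main obstacle.} The hard step is the $CS$ slope. It needs a justification for differentiating under the maximum when the kink $s^*(b,p^1_S(s,b))$ moves with $b$; the kink contributes nothing because the two branches are equal there. It also needs a derivative of $p^1_S$ in $b$. I would get that derivative by the implicit function theorem on \eqref{eq:seller_accept_noncommit}, holding with equality, using Proposition~\ref{prop:p1s_existence} for existence. I would then verify that Assumption~\ref{ass:regularity_wow} gives exactly the two needed inequalities: the $CS$ slope lies between the $D$ slope and the $CN$ slope. If the envelope step is delicate, I would fall back to comparing finite differences $V_B(s,b_2;\chi)-V_B(s,b_1;\chi)$ directly, using the Lipschitz bounds.
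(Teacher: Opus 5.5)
Your route is the paper's own. You use the same four slopes, with $\delta_R=\lambda_R/(\lambda_R+r)$ and $\delta_B=\lambda_B/(\lambda_B+r)$: $1$ for $A$, $\delta_R^2$ for $CN$, at most $\delta_R\delta_B$ for $D$ by the Lipschitz bound, and the Leibniz formula for $CS$ with the boundary term vanishing at $s^*$. You also use the implicit function theorem for $\partial_b p^1_S$, Assumption~\ref{ass:regularity_wow} to get $\partial_b V_B(\cdot;CS)\le\delta_B\bigl[1+(1-F_S(s^*))(-\partial_b p^1_S)\bigr]<\delta_R^2$, and increasing differences along the chain $D,CS,CN,A$. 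Your a.e.\ differentiability remark is a small gain in rigor. One correction: $\delta_B<\delta_R^2$ never suffices on its own for $CS\le CN$. With $\partial_b p^1_S<0$ (the generic case) the bracket can exceed one, so the regularity condition is needed in every case, not only ``where this fails.''

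The step that is not actually closed is $\partial_b V_B(\cdot;D)\le\partial_b V_B(\cdot;CS)$. Assumption~\ref{ass:regularity_wow} bounds $-\partial_b p^1_S$ from above. It therefore only caps the $CS$ slope and cannot supply the lower bound you attribute to ``regularity.'' Writing the difference out,
\[
\partial_b\bigl[V_B(\cdot;CS)-V_B(\cdot;D)\bigr]
=\delta_B\Bigl[(1-\delta_R)\int_{s'\le s^*}\partial_b V_B(s',b)\,dF_S(s')
+\int_{s'>s^*}\bigl(1-\partial_b p^1_S-\delta_R\,\partial_b V_B(s',b)\bigr)\,dF_S(s')\Bigr],
\]
so you need two facts:
\begin{itemize}
\item[(i)] $\partial_b p^1_S\le 1-\delta_R$;
\item[(ii)] $\partial_b V_B(s',b)\ge 0$ on the walk-away region $s'\le s^*$.
\end{itemize}

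Item (i) is available. $H_b\ge 0$, because $\partial_b s^*\le 0$ by Assumption~\ref{ass:monotonicity} and $(1-t)p^1_S-s>U_S(s)$ when $U_S(s)>0$. Also $H_{p^1}\ge 0$ at $p^1_S$, because the acceptance function $H(b,\cdot)$ is negative to the left of its smallest root. Hence $\partial_b p^1_S\le 0$.

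Item (ii) is what your phrase ``the same object as in $U_B$ but without the extra discount'' silently uses. Assumption~\ref{ass:monotonicity} does not give it, since Lipschitz constant one only yields $\partial_b V_B\ge -1$. You must either add monotonicity of $V_B(s',\cdot)$ as an assumption or show that the equilibrium operator preserves it.

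The paper's proof does not fill this either: it simply asserts the full slope ordering ``given $\lambda_B<\lambda_R$.'' So you are not missing an argument it contains, but neither version establishes that $V_B(\cdot;CS)-V_B(\cdot;D)$ is nondecreasing.
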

\begin{proof}
    See Appendix \ref{proof:partition}.
\end{proof}
\noindent The key intuition is as follows. Given a seller, buyers with the highest valuations have the highest opportunity cost of entering a time-consuming price negotiation or search process. They choose to purchase the product at face value. Those with the lowest valuations instead choose to find an even lower-valuation seller without entering any price negotiation. Those in the middle send a counteroffer, but among them, those with relatively higher valuations have a higher opportunity cost of waiting for the arrival of a new seller and, as such, choose not to search. Those with relatively lower valuations choose to search and are thus non-committed.

Finally, turning to the seller's choice of optimal list price $p^0(s)$, first note that seller $s$'s continuation value conditional on matching with a buyer $b$ and choosing list price $p^0$ is:
\[
V_S(s,b;p^0)
=
\begin{cases}
(1-t)p^0 - s
& \text{if } \chi_B(s,b;p^0) = A, \\[6pt]

\displaystyle
\frac{\lambda_R}{\lambda_R + r}\, U_S(s)
& \text{if } \chi_B(s,b;p^0) \in \{CN, CS\}, \\[10pt]

U_S(s)
& \text{if } \chi_B(s,b;p^0) = D.
\end{cases}
\]
Integrating the above, the seller chooses \(p^0\) to maximize:
\begin{equation}\label{eq:seller_maximization}
\begin{aligned}
U_S(s)
=
\max_{p^0}
\Biggl\{
\frac{\lambda_S}{\lambda_S + r}
\Biggl[
&\mathbb{P}\!\bigl(\chi_B(s,b;p^0)=A \mid s\bigr)\bigl((1-t)p^0 - s\bigr)
\\
&+
\mathbb{P}\!\bigl(\chi_B(s,b;p^0)\in\{CN,CS\}\mid s\bigr)
\frac{\lambda_R}{\lambda_R + r}\,U_S(s)
\\
&+
\mathbb{P}\!\bigl(\chi_B(s,b;p^0)=D\mid s\bigr)\,U_S(s)
\Biggr]
\Biggr\}.
\end{aligned}
\end{equation}
I denote the seller's optimal list price by \(p^0(s)\). 

I discuss two monotonicity results that facilitate the identification and estimation strategy. The first concerns the equilibrium list price \(p^0(s)\), which plays a central role in the estimation. Under the assumption that the buyer valuation distribution has increasing hazard rate, \(p^0(s)\) is monotone in the seller valuation. This allows me to use the observable list price \(p^0\) in place of the latent seller type \(s\) as the conditioning variable.
\begin{proposition}[Monotone list pricing]\label{prop:p0_monotonicity}
Let \(\mathcal{S}\subseteq\mathbb{R}\). Suppose that for all \(s\in\mathcal{S}\), the feasible buyer action set \(M(s)\) contains \(A\), \(CN\), and \(D\). Suppose further that \(F_B\) has increasing hazard rate. Then, for each \(s\in\mathcal{S}\), the seller's optimal list price \(p^0(s)\) is unique and strictly increasing in \(s\) on \(\mathcal{S}\).
\end{proposition}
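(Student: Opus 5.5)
The plan is to reduce the seller's problem \eqref{eq:seller_maximization} to a one-dimensional monopoly-pricing problem in which only the acceptance region depends on \(p^0\), and then apply a standard hazard-rate argument. By Proposition~\ref{prop:partition}, for fixed \(s\) the buyers who choose \(A\) are exactly those with \(b>\tau^{AC}(s;p^0)\). The counteroffer prices \(p^1_N(s)\) and \(p^1_S(s,b)\) and the continuation values \(U_B\), \(V_B(s',b)\) do not depend on the seller's own \(p^0\), since a deviation by one seller does not move market-wide continuation values. So only the \(A\) versus counteroffer margin moves with \(p^0\). Because \(CN\in M(s)\), the relevant cutoff equates \(b-p^0\) with \(\bigl(\tfrac{\lambda_R}{\lambda_R+r}\bigr)^2(b-p^1_N(s))\), which gives
\[
\tau^{AC}(s;p^0)=\frac{p^0-\delta^2 p^1_N(s)}{1-\delta^2},\qquad \delta\equiv\frac{\lambda_R}{\lambda_R+r}.
\]
This is affine and strictly increasing in \(p^0\) with slope \(1/(1-\delta^2)\). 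Buyers below the cutoff yield the seller either \(\delta U_S(s)\) or \(U_S(s)\), and the \(CN/CS/D\) split does not move with \(p^0\) in the relevant range. Treating \(U_S(s)\) as a fixed point on the right-hand side, the objective is therefore
\[
\Pi(p^0;s)=\bigl[1-F_B(\tau(p^0))\bigr]\bigl((1-t)p^0-s-\delta U_S(s)\bigr)+\text{const}(s).
\]

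The first step is the first-order condition. Writing \(h_B=f_B/(1-F_B)\) and \(x=\tau(p^0)\), it reads
\[
(1-t)(1-\delta^2)=h_B(x)\bigl((1-t)p^0-s-\delta U_S(s)\bigr).
\]
Substituting \(p^0=(1-\delta^2)x+\delta^2 p^1_N(s)\) and using \eqref{eq:committed_offer}, the right side becomes \(h_B(x)\bigl[(1-t)(1-\delta^2)x-g(s)\bigr]\), where
\[
g(s)=(1-\delta^2)s+(\delta-\delta U_S(s)\ldots)
\]
is collected explicitly from \(s\), \(\delta U_S\) and \(\delta^2\bigl(s+U_S/\delta\bigr)\). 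Carrying this out gives \(g(s)=(1-\delta^2)s\): the \(U_S\) terms cancel, because \(\delta^2\cdot U_S/\delta=\delta U_S\). I will verify this cancellation carefully. If it holds, the FOC is
\[
1=h_B(x)\Bigl(x-\frac{s}{1-t}\Bigr).
\]
This is a clean monopoly condition in the cutoff \(x\).

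The second step establishes uniqueness and monotonicity. Under the increasing hazard rate, \(x\mapsto h_B(x)\bigl(x-s/(1-t)\bigr)\) is strictly increasing wherever \(x>s/(1-t)\), and it is negative below that point. So \(\Pi\) is quasi-concave in \(x\), and the optimal cutoff \(x^*(s)\) is unique. Since the map is strictly decreasing in \(s\), \(x^*(s)\) is strictly increasing by the implicit function or single-crossing argument. Then
\[
p^0(s)=(1-\delta^2)x^*(s)+\delta^2 p^1_N(s).
\]
Here \(p^1_N(s)\) is strictly increasing in \(s\) because \(U_S'(s)\ge -1\) by the envelope theorem applied to \eqref{eq:seller_maximization}, with the Lipschitz part of Assumption~\ref{ass:monotonicity} implying \(U_S\) decreases with slope at most \(\lambda_S/(\lambda_S+r)<1\). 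This gives \(\tfrac{d}{ds}\bigl[s+U_S/\delta\bigr]>0\) provided \(\lambda_S/(\lambda_S+r)<\delta\), which holds because \(\lambda_R>\lambda_S\). Hence \(p^0(s)\) is strictly increasing.

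The main obstacle is justifying the reduction in the first step. Two things must hold on \(\mathcal{S}\): the marginal buyer at the \(A\) cutoff is a \(CN\) buyer, not \(CS\) or \(D\), and changing \(p^0\) locally does not move the other cutoffs. This is where the hypothesis that \(A,CN,D\in M(s)\) and the ordering in Proposition~\ref{prop:partition} are used. I also need to rule out corner solutions in which no buyer accepts. For that I would argue that \(\Pi\) at \(x^*\) strictly exceeds the all-reject payoff, since the bracket is positive there.
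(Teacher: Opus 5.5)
Your proposal is correct and follows essentially the same route as the paper. Both reparametrize the list price by the $A$--$CN$ cutoff via $p^0=(1-\delta_R^2)\tau+\delta_R^2 p^1_N(s)$ and note that the $U_S$ terms cancel (they do), so the problem reduces to maximizing $(1-F_B(\tau))((1-t)\tau-s)$. Both then use the increasing hazard rate to get a unique, strictly increasing optimal cutoff, and use the $\delta_S$-Lipschitz bound on $U_S$ with $\lambda_R>\lambda_S$ to show $p^1_N(s)$ is increasing. The differences are cosmetic: you take the first-order condition in $p^0$ before substituting, and you are more explicit than the paper about single-peakedness and about needing the marginal buyer at the acceptance cutoff to be a $CN$ buyer.
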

\begin{proof}
See Appendix~\ref{proof:p0_monotonicity}.
\end{proof}

The second concerns the non-committed counteroffer $p^1_S(s,b)$ with respect to the buyer type $b$, and is used to identify the search cost $c$.
To establish this result, I impose another regularity condition that ensures that, as $p^1_S$ rises, the positive effect of higher surplus conditional on trade as the counteroffer rises dominates the negative effect of increased probability that the buyer walks away as the counteroffer rises (Assumption \ref{ass:regularity_sc} in Appendix \ref{sec:regularity}). This yields the following monotonicity result.
\begin{proposition}\label{prop:p1s_monotonicity}
Suppose Assumption\ref{ass:regularity_sc} holds and that $f_S\!\left(s^*(b,p^1_S(s,b))\right)>0$. Then the non-committed counteroffer satisfies
\[
\frac{\partial p^1_S}{\partial b}(s,b) < 0.
\]
\end{proposition}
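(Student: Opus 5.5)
The plan is to apply the implicit function theorem to the seller's acceptance condition \eqref{eq:seller_accept_noncommit} holding with equality. Define
\[
G(p,b) \;=\; \frac{\lambda_B}{\lambda_B+r}\Bigl( (1-\kappa)F_S\bigl(s^*(b,p)\bigr)U_S(s) + \bigl[1-(1-\kappa)F_S\bigl(s^*(b,p)\bigr)\bigr]\bigl((1-t)p-s\bigr)\Bigr) - U_S(s),
\]
so that $G(p^1_S(s,b),b)=0$ by Proposition~\ref{prop:p1s_existence}. Then
\[
\frac{\partial p^1_S}{\partial b} = -\frac{\partial G/\partial b}{\partial G/\partial p},
\]
and the proof reduces to signing the two partial derivatives at $p=p^1_S(s,b)$. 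Both depend on $b$ and $p$ only through $s^*(b,p)$ and the direct $p$-term.

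The first step is to derive the partial derivatives of $s^*$ from its definition $V_B(s^*,b)=b-p$. Differentiating gives
\[
\frac{\partial s^*}{\partial p} = -\frac{1}{\partial_s V_B} \quad\text{and}\quad \frac{\partial s^*}{\partial b} = \frac{1-\partial_b V_B}{\partial_s V_B}.
\]
By Assumption~\ref{ass:monotonicity}, $\partial_s V_B<0$, and $V_B(s,\cdot)$ is 1-Lipschitz, so $\partial_b V_B\le 1$. (If differentiability is an issue, I would work with one-sided difference quotients instead.) Hence $\partial s^*/\partial p>0$: a higher price makes walking away more attractive. Also $\partial s^*/\partial b\le 0$: a higher-valuation buyer gains one-for-one from the accepted offer but at most one-for-one from searching, so she walks away less often. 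To obtain the strict sign I need $\partial_b V_B<1$ strictly. I would argue this from discounting: the continuation value after search delivers $b$ only with a discount factor below one, so the Lipschitz bound is strict. Failing that, I would state this as the nondegenerate case.

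The second step is to sign $\partial G/\partial b$. Only $F_S(s^*)$ depends on $b$, so
\[
\frac{\partial G}{\partial b} = \frac{\lambda_B}{\lambda_B+r}(1-\kappa)\, f_S(s^*)\,\frac{\partial s^*}{\partial b}\,\bigl[U_S(s)-((1-t)p-s)\bigr].
\]
At $p=p^1_S$, the equality together with $\lambda_B/(\lambda_B+r)<1$ and $U_S(s)>0$ forces $(1-t)p^1_S-s>U_S(s)$, so the bracket is negative. This is the same argument as in Proposition~\ref{prop:commit_premium}. With $f_S>0$ by hypothesis, $1-\kappa>0$, and $\partial s^*/\partial b<0$, we get $\partial G/\partial b>0$: a higher $b$ lowers the walkaway risk and raises the seller's acceptance payoff.

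The final step is to show $\partial G/\partial p>0$, which is the main obstacle. Here
\[
\partial G/\partial p \propto \bigl[1-(1-\kappa)F_S(s^*)\bigr](1-t) - (1-\kappa)f_S(s^*)\,\frac{\partial s^*}{\partial p}\,\bigl[(1-t)p-s-U_S(s)\bigr].
\]
The first term is the surplus effect and the second is the walkaway effect. Its sign is exactly what Assumption~\ref{ass:regularity_sc} is designed to deliver, so I would invoke that assumption directly, matching its stated inequality to this expression. The one check needed is that it is evaluated at $p^1_S$ itself; since $p^1_S$ is the smallest root, $G$ crosses zero from below there, which is consistent with $\partial G/\partial p\ge 0$. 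Combining the steps gives $\partial p^1_S/\partial b=-(+)/(+)<0$.
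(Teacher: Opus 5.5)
Your proposal is essentially the paper's proof: implicit differentiation of the acceptance equality, $H_{p^1}>0$ taken directly from Assumption~\ref{ass:regularity_sc}, and $H_b>0$ from $f_S>0$, $\partial s^*/\partial b<0$ and $(1-t)p^1_S-s>U_S(s)$. You get the last inequality directly from the equality and $\lambda_B/(\lambda_B+r)<1$, whereas the paper routes it through Proposition~\ref{prop:commit_premium}. One caution on strictness: your discounting argument for $\partial_b V_B(s^*,b)<1$ fails if the buyer's optimal action at the cutoff seller $s^*$ is $A$, since that slope is exactly one and then $H_b=0$; the paper's appeal to \eqref{eq:vbchi} likewise bounds the slope only for non-$A$ actions, so stating this as a nondegeneracy condition, as you suggest, is the right fix.
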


\begin{proof}
See Appendix~\ref{proof:p1s_monotonicity}.
\end{proof}

\section{Data}

I use an anonymized dataset obtained from Mercari.
The dataset includes, for each item, an anonymized item identifier, anonymized seller and buyer identifiers, the item title and description, the item category, and the item condition (unused, almost unused, no noticeable scratches or stains, some scratches or stains, scratches or stains present, overall condition is poor), as well as shipping conditions (shipping time in days, prefecture of origin, shipping method, and shipping payment responsibility, either prepaid by the seller or paid on delivery).
In addition, for each item, the dataset contains a complete timeline of events that records which anonymized user performed which action, including listing the item, liking the item, posting a comment, changing the price, or purchasing the item, and the corresponding timestamp.
Details on the construction of the dataset are provided in Online Appendix \ref{sec:dataset_construction}.

% Product choice
In line with the homogeneous-good model, I focus on a single product category: the iPhone 7 (128GB).
This market definition is similar to those adopted in, for example, \textcite{adachiCompetitionDynamicAuction2016}, \textcite{hendricksDynamicsEfficiencyDecentralized2018}, and \textcite{backus2025dynamic}.
I further restrict the sample to items with the following condition categories: no noticeable scratches or stains, some scratches or stains, and scratches or stains present. 
This leaves me with 12,451 items listed between June 24, 2020 and December 30, 2021.

%Table \ref{tab:hednoic} reports the estimation results of the hedonic regression. %The homogeneous-good assumption is further supported by limited substitution to outside goods: among users who neither sell nor purchase an item within the market definition, X\% do not purchase any other smartphone on the platform during the sample period.

I extract offer amounts from buyer comments. Because sellers may adjust prices independently of buyer interaction, I define the effective list price as the posted price after any price changes that occur prior to the first buyer offer. An \textit{observable match} is defined as the arrival of a distinct buyer through a non-seller comment or a purchase. There may also be an unobservable match in which a buyer arrives without making a purchase or leaving a comment. Such cases, in which the buyer choice of $D$ is not observed, are accounted for in the estimation. At the observable match level, an offer is classified as accepted if the seller subsequently updates the posted price to the offered amount. Matches are labeled as $C$ if an offer is accepted, $A$ if the item is sold without any offer, and $D$ otherwise. For matches with an accepted offer, I define a walkaway indicator that equals one if the offering buyer does not complete the purchase.

To align the data with the homogeneous goods assumption, I control for observable within market heterogeneity by estimating a hedonic price regression of list prices on observed item characteristics. I then residualize both list prices and offer amounts by subtracting the predicted component from this regression and adding back the intercept, so that prices are expressed net of observable characteristics. The corresponding regression results are reported in Appendix~\ref{sec:hedonic_regression}.

Table~\ref{tab:item_summary_stats} reports summary statistics at the listing level using residualized prices. Nearly all items are eventually sold, which is consistent with the implications of the model. About 20\% of items are sold through bargaining, indicating that bargaining is prevalent in this market. On average, sellers match with 1.57 buyers per item, implying that sellers have a positive continuation value during negotiations. Table~\ref{tab:match_summary_stats} reports summary statistics at the match level. Buyer reneging is strikingly common, occurring in 37.4\% of accepted-offer matches on average.

\input{tables/item_summary_stats}
\input{tables/match_summary_stats}

The data exhibit patterns that support the remaining model assumptions. Among users who make at least one purchase, 91.4\% purchase only one product, and 95.2\% of sellers list only one product, supporting the unit transaction assumption. In addition, only 7.2\% of listings involve a buyer who reappears through a comment or a purchase after the arrival of another buyer, supporting the assumption that there are no multiple simultaneous buyer arrivals.

\section{Preliminary Evidence} \label{sec:prelim}

A key prediction of the model is that buyers self-select into commitment status. High valuation buyers commit to transacting with the current seller, while lower valuation buyers continue to search while bargaining (Proposition~\ref{prop:partition}). In the data, such commitment can manifest as buyers signaling seriousness by pledging immediate payment. Figure~\ref{fig:wordcloud_adverbs} illustrates this behavior using a word cloud constructed from adverbs used in buyer messages during price negotiations, where language associated with immediacy, such as ``soon'' and ``immediate'', appears frequently. I collect 20 expressions related to immediacy and classify a buyer as making a pledge of \textit{immediate payment} if any of these expressions appears in the negotiation messages (see Online Appendix \ref{sec:dataset_construction} for the list). Overall, expressions indicating immediate payment appear in 41.4\% of buyer offer messages, suggesting that verbal commitment through payment timing is common in practice.

\begin{figure}[htbp]
\centering
\includegraphics[width=0.75\textwidth]{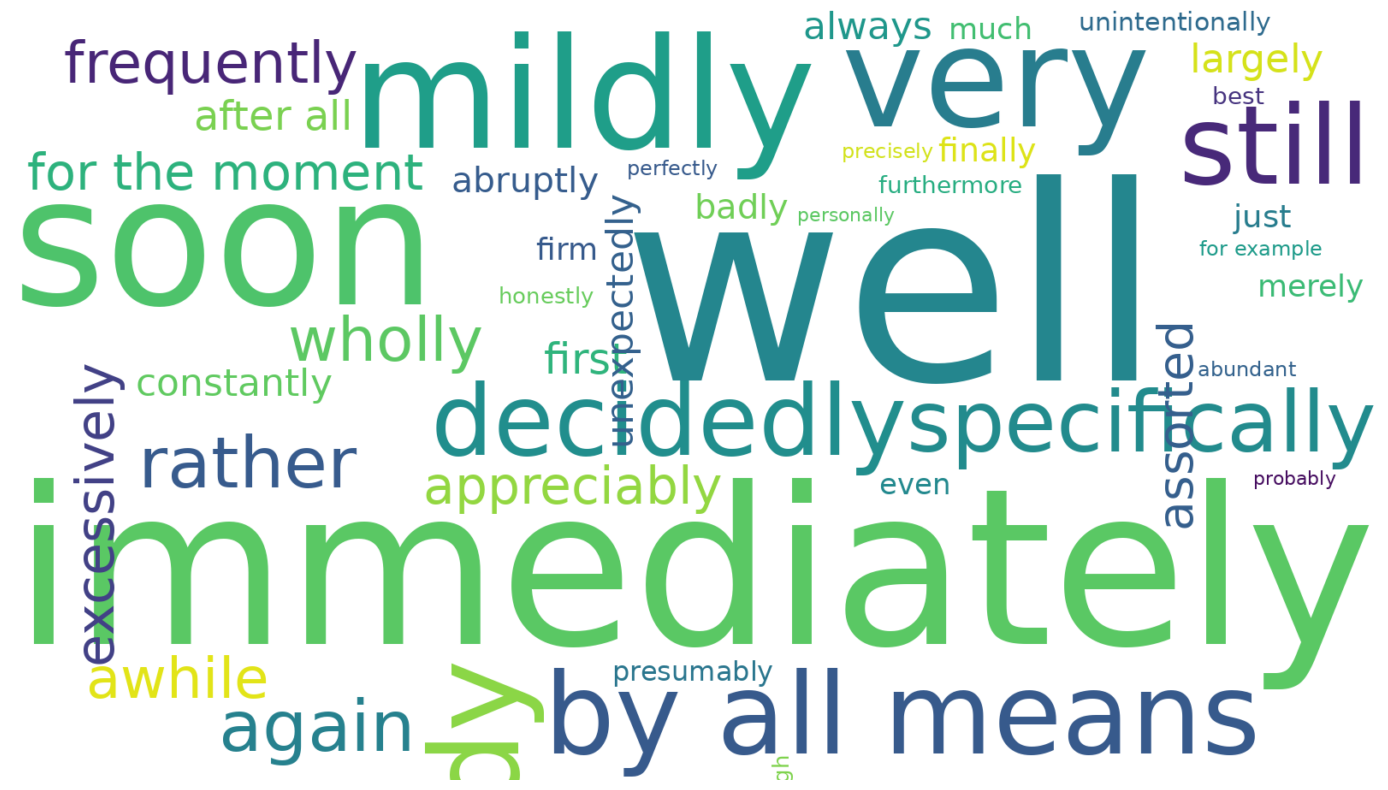}
\caption{Word Cloud of Adverbs Used in Bargaining Messages}
\label{fig:wordcloud_adverbs}
\vspace{0.5em}
\begin{minipage}{0.9\textwidth}
\footnotesize
\textit{Notes:} The figure shows an English word cloud constructed from adverbs appearing in buyer comments during bargaining matches. 
Japanese comments are tokenized using the Sudachi morphological analyzer. 
Tokens are filtered to exclude punctuation, single-character kana, and personal names. 
Japanese lemmas are translated into English using the Open Multilingual WordNet, mapping each Japanese lemma to the first available English lemma. 
Word sizes are proportional to token frequency, after aggregating frequencies across Japanese words that translate to the same English term and dropping very rare words.
\end{minipage}
\end{figure}

One might argue that pledges of immediate payment are merely cheap talk and do not reflect true commitment. Table~\ref{tab:commitment_walkaway_time} evaluates this claim using OLS regressions that relate bargaining outcomes to price concessions and commitment language in buyer messages. The independent variables include the discount, defined as the difference between the list price and the offer amount divided by the list price, and an indicator for whether the buyer pledged immediate payment in the offer message. The dependent variable is an acceptance indicator in column~1, a walkaway indicator in column~2, and $\log(1+\text{purchase time})$ in column~3, where time is measured in days. Column~1 uses 5,913 observed matches in which the buyer made an offer. Column~2 restricts attention to the 3,901 matches in which the offer was accepted, and column~3 further restricts to the 3,880 matches in which the product was eventually purchased.

Turning to the results, more aggressive discounts reduce the likelihood of acceptance: a 10 percentage point increase in the discount lowers the acceptance probability by approximately 13.6 percentage points. By contrast, offers accompanied by an immediate payment pledge are 26.8 percentage points more likely to be accepted, holding the discount constant. This acceptance premium is equivalent to an increase of about 19.6\% in the allowable discount while holding the acceptance probability fixed, consistent with Proposition~\ref{prop:commit_premium}. Column~2 shows that buyers who pledge immediate payment are 8.9 percentage points less likely to walk away after acceptance. Finally, column~3 shows that, conditional on eventual sale, pledged buyers complete purchases about 15\% faster $(\exp(-0.162)-1)$. Together, these results indicate that pledges of immediate payment reduce reneging and post-agreement delay and are therefore typically honored in practice. Given these results, I treat counteroffers with pledge of immediate payment as committed offers and those without as non-committed offers.

\input{tables/commitment_walkaway_table}

\section{Identification and Estimation}

My objective is to estimate the following model primitives: the arrival rate of buyers to sellers ($\lambda_S$), the arrival rate of sellers to buyers ($\lambda_B$), the arrival rate of purchase opportunities following a counteroffer ($\lambda_R$), the seller valuation distribution $F_S$, the buyer valuation distribution $F_B$, and the buyer search cost $c$, for a given discount rate $r = 0.05$, numbers of buyers and sellers $N_B$ and $N_S$, and an exogenous post-walkaway trading rate $\kappa$. Identification proceeds in stages. Arrival rates are identified from observed buyer arrivals and post counteroffer purchase delays. Seller and buyer valuation distributions are recovered from optimality and indifference conditions that map observed prices into latent valuations. The search cost is then identified from the indifference between searching and not searching. I describe estimation procedures that directly implement these constructive identification arguments.

%For the discount rate $r$, I report results based on values implied by the daily discrete discount factors used in \textcite{bodoh-creedHowEfficientAre2021}. I convert these into continuous daily discount rates using $\delta = e^{-r}$, yielding rates ranging from $0.02$ to $0.29$. Although these rates may appear high when interpreted as pure time discounting, they are best understood as effective discounting that captures both time preferences and the risk of exit from the market.

The number of buyers $N_B$ and the number of sellers $N_S$ are defined as the number of unique buyers who complete a purchase and the number of listings observed over the sample period, respectively. The exogenous post-walkaway trading rate $\kappa$ is calibrated as the fraction of products that are purchased by another buyer within 1.5 days after a non-committed offer by the original buyer, conditional on no purchase by that buyer, which equals $0.687$.

\paragraph{Arrival rates \(\lambda_S\) and \(\lambda_B\).} I assume that the number of unobservable matches is equal to the number of pure likes, that is, likes not accompanied by a purchase or a comment. The seller arrival rate $\lambda_S$ is then identified as the expected number of distinct buyer arrivals divided by the number of days elapsed from listing to sale, with exposure capped at 30 days to limit the influence of extremely long-lived listings. I set $N_S$ equal to the number of distinct users who list at least one item, and $N_B$ equal to the number of distinct users who complete at least one purchase as a buyer, during the sample period. The buyer arrival rate $\lambda_B$ is then recovered from the steady-state condition \eqref{eq:lambda_b_steady_state}.

\paragraph{Response rate \(\lambda_R\).} The response rate $\lambda_R$ is identified from the time between a committed counteroffer and the subsequent purchase. In the data, I restrict attention to committed counteroffers that do not result in walkaway, and observe for each such match the elapsed time between the counteroffer timestamp and the sale timestamp. In the model, this delay reflects two sequential arrivals governed by the same rate $\lambda_R$, implying an expected delay of $2/\lambda_R$. I therefore identify $\lambda_R$ by equating this mean to the sample mean of the observed delays.

\paragraph{Seller Valuation Distribution \(F_S\).}
\(F_S\) is identified from the optimality condition for committed counteroffer \eqref{eq:committed_offer}, which expresses \(p^1_N(s)\) as a function of \(s\) and the continuation value \(U_S(s)\). The continuation value can be written as the expected discounted realized payoff:
\begin{equation}\label{eq:us_discounted}
U_S(s)
=
\mathbb{E}
\Bigl[
e^{-r T_{\text{sell}}}
\bigl((1-t)p_{\text{sell}} - s\bigr)
\;\big|\; s
\Bigr],
\end{equation}
where \(T_{\text{sell}}\) and \(p_{\text{sell}}\) denote the time to sale and the transaction price, respectively.

Combining \eqref{eq:committed_offer} and \eqref{eq:us_discounted} and solving for \(s\) yields the pseudovalue representation
\begin{equation}\label{eq:pseudovalue}
s
=
\frac{
\mathbb{E}\!\Bigl[
e^{-r T_{\text{sell}}}
(1-t)p_{\text{sell}}
\;\big|\; s
\Bigr]
-
\frac{\lambda_R}{\lambda_R + r}
(1-t)p^{1}_N(s)
}{
\mathbb{E}\!\Bigl[
e^{-r T_{\text{sell}}}
\;\big|\; s
\Bigr]
-
\frac{\lambda_R}{\lambda_R + r}
}.
\end{equation}
By monotonicity of \(p^0(s)\) in \(s\) (Proposition \ref{prop:p0_monotonicity}), conditioning on \(s\) can be replaced by conditioning on the observable \(p^0\) almost surely. Hence the pseudovalue \(s\) is identified for each \(p^0\).

I can thus estimate the conditional expectations in \eqref{eq:pseudovalue} by replacing the conditioning variable $s$ with $p^0 = p^0(s)$. Specifically, I estimate these expectations using local linear regression with a Gaussian kernel.\footnote{The bandwidth is chosen using the normal reference rule, \(h = 1.06 \, \hat\sigma_{p_0} \, n^{-1/5}\), where \(\hat\sigma_{p_0}\) is the sample standard deviation of $p_0$ and $n$ is the sample size. When conditioning on $p_0$ elsewhere, bandwidths are chosen analogously.} I also estimate $p^1_N(s)$ as $\mathbb{E}[p^1_N \mid p^0]$ using the same procedure. Substituting these estimates yields $\hat{s}$ for each observation, and I estimate $F_S$ by the empirical distribution function $\hat{F}_S$ of $\hat{s}$.

\paragraph{Buyer Valuation Distribution \(F_B\).}
The buyer valuation distribution \(F_B\) is identified from the indifference condition between immediate acceptance and making a committed counteroffer (Proposition~\ref{prop:partition}). For a given seller type \(s\), the valuation of the indifferent buyer \(\tau^{AC}(s)\) satisfies
\begin{equation}\label{eq:tauac}
    \tau^{AC}(s)
=
\frac{
p^0(s)
-
\left(\frac{\lambda_R}{\lambda_R + r}\right)^2 p^1_N(s)
}{
1-\left(\frac{\lambda_R}{\lambda_R + r}\right)^2
}.
\end{equation}

The left tail of the buyer valuation distribution at \(\tau^{AC}(s)\) is:
\[
F_B\bigl(\tau^{AC}(s)\bigr)
=
1
-
\mathbb{E}
\Bigl[
\mathbb{I}\!\bigl\{\chi_B(s,b) = A\bigr\}
\;\big|\; s
\Bigr].
\]
The expectation on the right hand side represents, for a given seller \(s\), the probability that an arriving buyer immediately purchases the product (action \(A\)), and this needs to be unconditional on whether the match is observable. See Appendix~\ref{sec:selection} on how I recover this conditional expectation from the knowledge of the seller arrival rate \(\lambda_S\) and the same expectation but conditional on the match being observable.

Because I only observe \(\tau^{AC}(s)\) over a limited range, \(F_B\) is nonparametrically identified only on a restricted support. I therefore parametrize \(F_B\) as a normal distribution with mean \(\mu_B\) and variance \(\sigma_B^2\). I estimate \(\mu_B\) and \(\sigma_B^2\) by least squares:
\[
(\hat\mu_B,\hat\sigma_B^2)
=
\arg\min_{\mu_B,\sigma_B^2}
\sum_{p^0}
\Biggl[
\Bigl(
1
-
\hat{\mathbb{E}}
\bigl[
\mathbb{I}\!\{\chi_B(s,b) = A\}
\;\big|\; p^0
\bigr]
\Bigr)
-
\Phi\!\Bigl(
\hat{\tau}^{AC}(p^0)
\,;\,
\mu_B,\sigma_B^2
\Bigr)
\Biggr]^2
\]
where the summation is over all sellers indexed by their list prices $p^0$, \(\hat{\mathbb{E}}[\cdot \mid p^0]\) denotes the estimated conditional expectation, \(\Phi\) is the standard normal cumulative distribution function, and \(\hat{\tau}^{AC}(p^0)\) is the estimate of \(\tau^{AC}(s)\) for the seller with $p^0 = p^0(s)$, constructed from \eqref{eq:tauac}.

\paragraph{Search Cost $c$.}
Finally, the search cost $c$ is identified from the indifference condition between searching and not searching. We have
\begin{equation}\label{eq:vbcn}
V_B(s, b, CN; p^0) = 
\Bigl(\frac{\lambda_R}{\lambda_R + r}\Bigr)^2
\bigl(
b - p_N^1(s)
\bigr),
\end{equation}
and
\begin{align}\label{eq:vbcs}
V_B(s, b, CS)
&=
\mathbb{E}\Bigl[
e^{-r T_{\text{purchase}}}
\bigl(b - p_{\text{purchase}}\bigr)
-
\int_{0}^{T_{\text{purchase}}}
e^{-rt}\, c \, dt
\;\Big|\; s
\Bigr]\nonumber \\
&=
\underbrace{
\mathbb{E}\Bigl[
e^{-r T_{\text{purchase}}}
\bigl(b - p_{\text{purchase}}\bigr)
\;\Big|\; s, b
\Bigr]
}_{\equiv V_B^1(s,b,CS)}
-
\frac{c}{r}\,
\underbrace{
\mathbb{E}\Bigl[
1 - e^{-r T_{\text{purchase}}}
\;\Big|\; s, b
\Bigr]
}_{\equiv V_B^2(s,b,CS)}.
\end{align}
Here $T_{\text{purchase}}$ and $p_{\text{purchase}}$ denote the time to purchase and the transaction price\footnote{
Equation \eqref{eq:vbcs} integrates $c$ over calendar time, although buyers do not incur $c$ during delays governed by $\lambda_R$ (e.g., after $CN$ or $D$). This leads to a slight overstatement of the time over which search costs are incurred, which is negligible when $\lambda_R$ is sufficiently large (as confirmed in the estimation results).
}.

Given $p_S^1(s,b)$ and $p^0(s)$, the valuation $b$ is recovered from the conditional rank of $p_S^1$. Start with the conditional distribution of non-committed counteroffer:
\begin{equation}\label{eq:p1scdf}
\mathbb{P}(p_S^1 \le p_S^1(s,b)\mid s)
=
\frac{F_b(\tau^{NS}(s)) - F_b(b)}{F_b(\tau^{NS}(s)) - F_b(\tau^{CD}(s))}.
\end{equation}
Using
\[
F_b(\tau^{NS}(s))=\mathbb{E}\!\left[\mathbb{I}\{\chi_B(s,b)\in\{CS,D\}\}\mid s\right],
\quad
F_b(\tau^{CD}(s))=\mathbb{E}\!\left[\mathbb{I}\{\chi_B(s,b)=D\}\mid s\right],
\]
and replacing $s$ with $p^0=p^0(s)$, this yields the plug-in estimator for the pseudo-value $b$:
\[
\hat b(p^0,p_S^1)
=
\hat F_b^{-1}\!\left(
\hat{\mathbb E}\!\left[
\mathbb{I}\{\chi_B\in\{CS,D\}\}
\mid p^0
\right]
-
\mathbb{P}(p_S^{1\prime}\le p_S^1\mid p^0)
\hat{\mathbb E}\!\left[
\mathbb{I}\{\chi_B=CS\}
\mid p^0
\right]
\right).
\]

Substituting $\hat b(p^0,p_S^1)$ into \eqref{eq:vbcn}--\eqref{eq:vbcs} and replacing conditioning variables $(s,b)$ with $(p^0,p_S^1)$ yields estimators $\hat V_B(p^0,p_S^1,CN)$, $\hat V_B^1(p^0,p_S^1,CS)$, and $\hat V_B^2(p^0,p_S^1,CS)$. These conditional expectations are estimated using bivariate local linear regression with a Gaussian kernel.

Evaluating the lower tail $\underline{p}_S^1(p^0)$ of the conditional distribution of $p_S^1$ given $p^0$, defined as the 0.1st percentile, we obtain
\[
\hat V_B(p^0,\underline{p}_S^1(p^0),CN)
\approx
\hat V_B^1(p^0,\underline{p}_S^1(p^0),CS)
-
\frac{c}{r}\,
\hat V_B^2(p^0,\underline{p}_S^1(p^0),CS),
\]
where the approximation follows because the marginal buyer with $b=\tau^{NS}(s)$ corresponds to the lowest non-committed offer by monotonicity (Proposition \ref{prop:p1s_monotonicity}). Finally, I integrate this expression over $p^0$ using its empirical distribution and solve for $c$ to obtain the estimate $\hat{c}$.

\section{Results}

Estimation results are reported in Table \ref{tab:estimation_bootstrap_se_single_r} (Table \ref{tab:estimation_summary_panelc_only} shows how estimates vary with other values of the discount rate $r$). Bootstrap standard errors are reported in parentheses. Panel A reports parameters that are not estimated, either calibrated or directly observed, while Panel B reports estimates for parameters that are invariant across specifications. The estimated response rate is \( \lambda_R = 3.40 \), implying that a response or purchase opportunity arrives on average once every 7.06 hours. The buyer arrival rate to a seller is estimated to be \( \lambda_S = 0.64 \), meaning that a seller encounters a buyer approximately once every 1.56 days, which is substantially slower than the response rate. Note that the estimates are consistent with the assumption that $\lambda_R > \lambda_S$. Conversely, the seller arrival rate to a buyer is \( \lambda_B = 0.80 \), reflecting the fact that sellers are relatively more abundant than buyers in the market, so buyers encounter sellers more frequently than sellers encounter buyers.

Turning to Panel~C, which reports estimates of the search cost and valuation distributions, the implied search cost is economically nonnegligible. The estimated flow search cost is 819.99 yen per day. While the bootstrap standard error is large (441.59), this largely reflects the right-skewed nature of the bootstrap distribution, which arises from the ratio-of-means structure of the estimator. The actual 95\% bootstrap percentile confidence interval, [54.32, 1727.35], excludes zero, indicating that the estimate is statistically distinguishable from zero. The remaining columns report quartiles of the seller and buyer valuation distributions. The point estimate of the first quartile of buyer valuations is negative, but the lower quantiles, including the median, are imprecisely estimated. Buyers in the upper quartile exhibit positive gains from trade even when matched with high valuation sellers. Nevertheless, trade does not always occur, which can be explained by agents' continuation values from matching with alternative partners, even in this complete information environment.

In terms of model fit, although list prices are not targeted moments in the estimation, the model reproduces their empirical distribution reasonably well at this value. Simulating the equilibrium using the estimated parameters (see Online Appendix~\ref{sec:vfi} for details) yields a mean list price of 14{,}493.6 yen in the data versus 14{,}492.5 yen in the model and a median of 14{,}145.8 yen in the data versus 14{,}572.9 yen in the model. The model understates dispersion, with a standard deviation of 2{,}817.5 yen in the data compared with 1{,}465.0 yen in the model, and compresses the upper tail of the distribution.

\input{tables/table_estimation_bootstrap_se_single_r}

\section{Counterfactual Analysis}
Using the estimated model, I conduct a counterfactual experiment in which buyers are fully committed to purchasing once their offers are accepted. I implement this scenario by imposing an effectively infinite search cost during the bargaining stage, thereby eliminating post acceptance search. Under this full commitment policy, buyers must complete the purchase after an offer is accepted and cannot search for other sellers. The search cost faced by unmatched buyers is held fixed, so only search during the bargaining stage is shut down. I compute the stationary equilibrium by value function iteration on \(U_S\), \(U_B\), and \(V_B\). Details of the simulation procedure are provided in Online Appendix \ref{sec:vfi}.

I measure the welfare of a seller with valuation \(s\) by \(U_S(s)\) and the welfare of a buyer with valuation \(b\) by \(U_B(b)\). Platform welfare is measured by the platform's continuation value per listing, denoted by \(U_P\). This value satisfies the recursive equation
\begin{equation}\label{eq:platform_welfare}
\begin{aligned}
U_P
=
\int\!\!\int
\Biggl[
&t\,\mathds{1}\!\left\{\chi_B(s,b)=A\right\}\, p^0(s)
\\
&+
t\,\mathds{1}\!\left\{\chi_B(s,b)=CN\right\}
\left(\frac{\lambda_R}{\lambda_R + r}\right)^2 p^1_N(s)
\\
&+
t\,\mathds{1}\!\left\{\chi_B(s,b)=CS\right\}
\frac{\lambda_B}{\lambda_B + r}
\Bigl(1-(1-\kappa)\,F_S\!\bigl(s^*(b,p^1_S(s,b))\bigr)\Bigr)
p^1_S(s,b)
\\
&+
\Bigl[
\mathds{1}\!\left\{\chi_B(s,b)=D\right\}
+
\mathds{1}\!\left\{\chi_B(s,b)=CS\right\}
(1-\kappa)\,F_S\!\bigl(s^*(b,p^1_S(s,b))\bigr)
\Bigr] U_P
\Biggr]
\, dF_S(s)\, dF_B(b).
\end{aligned}
\end{equation}
The first three terms are the platform's expected discounted commission revenues from the current match under acceptance, committed counteroffer, and non-committed counteroffer. The final term captures continuation: when the buyer declines immediately or walks away after a non-committed counteroffer, the listing remains active and the platform retains continuation value \(U_P\).

\noindent Overall welfare is measured on a per listing basis as the sum of expected seller welfare, buyer welfare scaled by the buyer to seller ratio, and platform welfare per listing:
\begin{equation}\label{eq:overall_welfare}
\begin{aligned}
W
&=
\int U_S(s)\, dF_S(s)
+
\frac{N_B}{N_S}\int U_B(b)\, dF_B(b)
+
U_P.
\end{aligned}
\end{equation}

Table~\ref{tab:welfare_unified_r0p05} reports the aggregate welfare effects of introducing a full commitment policy based on the aforementioned welfare measures. Overall, total welfare increases modestly, with a gain of 137.65 yen per listing. Sellers gain on average 266.78 yen per listing, while buyers lose 194.72 yen on average. Platform welfare also increases by 27.67 yen per listing. The table further reports heterogeneous welfare effects by valuation quartiles for sellers and buyers, and the magnitude of these impacts varies substantially across types. Seller welfare effects are positive throughout the distribution and increase from the lower to the middle of the distribution: gains are 221.29 yen in the first quartile and 248.88 yen in the second quartile, rising to 379.62 yen in the third quartile, before moderating to 341.27 yen in the top quartile. Buyer welfare effects, by contrast, are uniformly negative across the distribution, with losses increasing in buyer valuation. Buyers in the bottom quartile incur a loss of 61.11 yen, compared with losses of 181.22 yen, 293.92 yen, and 325.02 yen in the second, third, and fourth quartiles, respectively.

\input{tables/welfare_unified_r0p05}

To understand the internal mechanism behind these results, Table~\ref{tab:action_p0_p1_baseline_diff_r0p05} reports the fractions of each buyer action $\chi_B(s,b)\in{A,C,D}$, along with list prices $p^0$ and accepted counteroffers $p^1$, both overall and by quartiles of the seller valuation $s$, as well as their changes under the counterfactual. Overall, the introduction of full commitment reduces counteroffers $(\Delta C=-0.092)$ and increases outright declines $(\Delta D=0.073)$, while slightly increasing immediate purchases $(\Delta A=0.019)$. These shifts are accompanied by an increase in list prices $(\Delta p^0=281.4)$ and a decline in accepted counteroffer prices $(\Delta p^1=-353.5)$.

The effects are strongly heterogeneous across seller types. Sellers in the top quartile experience a sharp reduction in counteroffers \((\Delta C = -0.188)\), with most buyers switching to outright declines \((\Delta D = 0.171)\). Because these sellers no longer face delayed purchases and walkaways, their continuation values increase, lowering the opportunity cost of raising list prices and being declined, and they respond by raising list prices substantially \((\Delta p^0 = 442.2)\). Sellers with lower valuations are also meaningfully affected. In the second quartile, counteroffers decline \((\Delta C = -0.089)\) and immediate purchases increase \((\Delta A = 0.059)\), while in the third quartile immediate purchases decrease slightly \((\Delta A = -0.004)\) alongside a reduction in counteroffers \((\Delta C = -0.068)\). Across all quartiles, list prices increase by 160.8, 60.3, 462.3, and 442.2 yen in the first through fourth quartiles, respectively. This upward shift in list prices improves continuation values and raises the level of committed counteroffers directed toward lower-valuation sellers.

The welfare increase is driven primarily by seller gains, while buyers experience welfare losses and the platform experiences a modest welfare gain. When matched with higher-valuation sellers, buyers face substantially higher list prices, which is particularly detrimental for higher-valuation buyers, who are more likely to transact at list prices. In contrast, low-valuation buyers are less affected, as their outcomes are largely determined by whether they match with lower-valuation sellers. For the platform, full commitment reduces negotiated trade, reflected in fewer counteroffers and more declines $(\Delta C=-0.092$ and $\Delta D=0.073)$, alongside a reallocation toward immediate purchases $(\Delta A=0.019)$, and lowers negotiated prices $(\Delta p^{1}=-353.5)$. At the same time, however, list prices increase substantially $(\Delta p^{0}=281.4)$, so the platform earns more commission revenue on transactions that occur at the posted price. This gain from higher list-price transactions outweighs the revenue loss from fewer counteroffers and lower negotiated prices, yielding a modest increase in platform welfare overall. Overall, a full commitment policy benefits sellers and modestly benefits the platform, but harms buyers.

\input{tables/action_p0_p1_baseline_diff_by_s_quartile_r0p05}

\section{Conclusion}

This paper is the first to study price negotiations on online marketplaces while explicitly accounting for the fact that buyers naturally search for alternative sellers, which creates a commitment problem in completing payments, a feature that has long been pervasive in practice. I build a structural model of bargaining with buyer search in which buyers self-select into commitment status, and sellers require a premium from non-committed buyers when accepting their counteroffers. Using detailed textual data from a Japanese online platform, I detect buyer attempts to credibly convey immediate payment and show that sellers accept lower counteroffers from such buyers, consistent with the model's predictions.

Counterfactual analysis shows that a full commitment policy increases overall welfare, driven primarily by seller gains that outweigh buyer losses and are accompanied by a modest gain for the platform. In equilibrium, sellers, particularly those with higher valuations, respond by raising their list prices, while the elimination of the option value of search during bargaining further reduces buyer welfare. Although the platform experiences a reduction in negotiated trade, both through fewer counteroffers and more declines at the extensive margin and through lower negotiated prices at the intensive margin, the increase in list prices more than offsets these losses and yields a modest increase in platform welfare. Although these quantitative results are specific to the institutional setting and estimated primitives in this market, the underlying framework and economic mechanism apply more broadly to bargaining environments in which buyers can continue searching while negotiating.

While platforms such as eBay have implemented randomized rollouts of commitment rules, such experiments capture only a limited range of behavior and cannot evaluate platform-wide equilibrium effects. This analysis complements those efforts by providing a model-based framework for studying platform design choices. At the same time, the framework is simplified and relies on strong assumptions, most notably complete information, which are imposed for tractability and to enable counterfactual analysis. Future work could relax these assumptions by estimating models with incomplete information or richer inter bargaining dynamics, following recent work on static models (see \textcite{freyberger2025well}; \textcite{larsen2021efficiency}), though possibly at the cost of a tractable equilibrium.

\printbibliography

\appendix
\renewcommand{\thesection}{\Alph{section}}

\section{Appendix}

\subsection{Regularity conditions}\label{sec:regularity}

\begin{assumption}[Regularity I]\label{ass:regularity_wow}
For all $(s,b)$,
\begin{equation}\label{eq:reg1}
    \Bigl(1 - F_S\!\bigl(s^*(b,p^1_S(s,b))\bigr)\Bigr)
    \frac{H_b\!\bigl(b,p^1_S(s,b)\bigr)}{H_{p^1}\!\bigl(b,p^1_S(s,b)\bigr)}
    <
    \frac{\delta_R^2}{\delta_B}-1 ,
\end{equation}
where
\[
H_b(b,p^1)
=
\delta_B\,
(1-\kappa)\,
f_S\!\bigl(s^*(b,p^1)\bigr)\,
\frac{\partial s^*(b,p^1)}{\partial b}\,
\Bigl(U_S(s)-(1-t)p^1+s\Bigr),
\]
and
\begin{equation}\label{eq:hp1}
\begin{aligned}
H_{p^1}(b,p^1)
=
\delta_B
\Bigl[
&(1-t)\Bigl(1-(1-\kappa)F_S\!\bigl(s^*(b,p^1)\bigr)\Bigr)
\\
&+
(1-\kappa)\,
f_S\!\bigl(s^*(b,p^1)\bigr)\,
\frac{\partial s^*(b,p^1)}{\partial p^1}\,
\Bigl(U_S(s)-(1-t)p^1+s\Bigr)
\Bigr].
\end{aligned}
\end{equation}
Here
\[
\delta_R=\frac{\lambda_R}{\lambda_R+r},
\qquad
\delta_B=\frac{\lambda_B}{\lambda_B+r},
\]
and $s^*(b,p^1)$ is defined by the cutoff condition
$V_B\!\bigl(s^*(b,p^1),b\bigr)=b-p^1$.
\end{assumption}
\noindent The ratio ${H_b} / {H_{p^1}}$ captures the negative of the sensitivity of the non-committed counteroffer $p^1_S(s,b)$ with respect to the buyer valuation $b$, as implied by the implicit function theorem applied to the seller acceptance condition \eqref{eq:seller_accept_noncommit}. The left-hand side of \eqref{eq:reg1} can therefore be interpreted as the discounted marginal benefit, from a marginal increase in $b$, of having a lower non-committed offer accepted and not walking away. The condition requires that this marginal benefit not exceed the right-hand side of \eqref{eq:reg1}, which reflects the heavier discounting faced by non-committed buyers.

\begin{assumption}[Regularity II]\label{ass:regularity_sc}
\[
H_{p^1}\!\bigl(b,p^1_S(s,b)\bigr) > 0,
\]
where $H_{p^1}(b,p^1)$ is defined in \eqref{eq:hp1}.
\end{assumption}
\noindent
The first component of the derivative \eqref{eq:hp1} is positive, reflecting the higher surplus conditional on trade as the counteroffer rises.
The second component is negative, reflecting the increased probability that the buyer walks away as the counteroffer rises.
Assumption~\ref{ass:regularity_sc} requires that the former effect dominates the latter.

\subsection{Proofs for Propositions}

\begin{proof}[Proof for Proposition \ref{prop:p1s_existence}]\label{proof:p1s_existence}
Fix $s$ and $b$. Under the maintained assumption in Proposition~\ref{prop:p1s_existence}, $U_S(s)>0$. Define $l(p^1)\equiv (1-t)p^1-s$ and $\delta\equiv \lambda_B/(\lambda_B+r)\in(0,1)$. Let
\[
w(p^1)\equiv (1-\kappa)F_S\!\bigl(s^*(b,p^1)\bigr),
\qquad
G(p^1)\equiv
\delta\Bigl(w(p^1)U_S(s)+\bigl[1-w(p^1)\bigr]l(p^1)\Bigr)-U_S(s).
\]
By Assumption~\ref{ass:monotonicity}, for each $b$ the function $V_B(\cdot,b)$ is continuous and strictly decreasing in $s$. Hence for any $(b,p^1)$ the cutoff $s^*(b,p^1)$ defined by $V_B(s^*(b,p^1),b)=b-p^1$ is well defined, and the mapping $p^1\mapsto s^*(b,p^1)$ is continuous. Since $F_S$ is absolutely continuous, it is continuous, so $w(\cdot)$ is continuous. Because $l(\cdot)$ is linear, $G(\cdot)$ is continuous.

Let $\underline{p}^1\equiv (U_S(s)+s)/(1-t)$ so that $l(\underline{p}^1)=U_S(s)$. Then
\[
G(\underline{p}^1)
=
\delta\Bigl(w(\underline{p}^1)U_S(s)+\bigl[1-w(\underline{p}^1)\bigr]U_S(s)\Bigr)-U_S(s)
=
(\delta-1)U_S(s)
<
0.
\]
Next, since $0\le w(p^1)\le 1-\kappa$, we have $1-w(p^1)\ge \kappa>0$ for all $p^1$. Using $U_S(s)>0$ and $w(p^1)\ge 0$,
\[
G(p^1)
=
\delta\Bigl(w(p^1)U_S(s)+\bigl[1-w(p^1)\bigr]l(p^1)\Bigr)-U_S(s)
\ge
\delta\,\kappa\,l(p^1)-U_S(s).
\]
Because $l(p^1)=(1-t)p^1-s$ diverges to $+\infty$ as $p^1\to\infty$, the right hand side diverges to $+\infty$. Hence there exists $\overline{p}^1$ such that $G(\overline{p}^1)>0$.
By continuity of $G$, the intermediate value theorem implies that there exists $p^{1}_S\in(\underline{p}^1,\overline{p}^1)$ with $G(p^{1}_S)=0$, that is, \eqref{eq:seller_accept_noncommit} holds with equality at $p^{1}_S$.
\end{proof}

\begin{proof}[Proof for Proposition \ref{prop:commit_premium}]\label{proof:commit_premium}
Suppose that $p^1_S(s,b)$ exists in a neighborhood of $b$. Then the seller acceptance condition
\eqref{eq:seller_accept_noncommit} holds with equality. Likewise, $p^1_N(s)$ satisfies
\eqref{eq:seller_accept_commit} with equality.
Since $\frac{\lambda_B}{\lambda_B + r}< \frac{\lambda_R}{\lambda_R+r}$, combining the two equalities implies
\begin{equation}\label{eq:ps_ge_pn_step}
\begin{aligned}
&(1 - \kappa) F_S\!\bigl(s^*(b,p^1_S(s,b))\bigr)\,U_S(s)
\\
&\qquad
+\Bigl[1-(1 - \kappa) F_S\!\bigl(s^*(b,p^1_S(s,b))\bigr)\Bigr]
\bigl((1-t)p^1_S(s,b)-s\bigr)
\;>\;
(1-t)p^1_N(s)-s .
\end{aligned}
\end{equation}
The left-hand side of \eqref{eq:ps_ge_pn_step} is a convex combination of $U_S(s)$ and
$(1-t)p^1_S(s,b)-s$, with weight $(1 - \kappa) F_S\!\bigl(s^*(b,p^1_S(s,b))\bigr)\in[0,1]$ on the former.
Since $U_S(s)\le (1-t)p^1_S(s,b)-s$ at the acceptance threshold, the expression is bounded above by
$(1-t)p^1_S(s,b)-s$.
Therefore,
\[
(1-t)p^1_S(s,b)-s \;>\; (1-t)p^1_N(s)-s,
\]
which implies
\[
p^1_S(s,b)\;>\;p^1_N(s).
\]
\end{proof}

\begin{proof}[Proof of Proposition~\ref{prop:partition}]\label{proof:partition}
Fix $s$.  
Define $\delta_R=\frac{\lambda_R}{\lambda_R+r}$ and $\delta_B=\frac{\lambda_B}{\lambda_B+r}$.
Let $s^*(b,p^1)$ be defined by $V_B\!\bigl(s^*(b,p^1),b\bigr)=b-p^1$.
Define the seller acceptance function
\[
H(b,p^1)
:=
\delta_B\Bigl(
(1-\kappa)F_S\!\bigl(s^*(b,p^1)\bigr)\,U_S(s)
+\bigl[1-(1-\kappa)F_S\!\bigl(s^*(b,p^1)\bigr)\bigr]\bigl((1-t)p^1-s\bigr)
\Bigr)
-U_S(s),
\]
so that the non-committed counteroffer is characterized by
$H\bigl(b,p^1_S(s,b)\bigr)=0$ (see \eqref{eq:seller_accept_noncommit}).

Differentiating $H$ yields
\[
H_b(b,p^1)
=
\delta_B(1-\kappa)\,
f_S\!\bigl(s^*(b,p^1)\bigr)\,
\frac{\partial s^*(b,p^1)}{\partial b}\,
\Bigl(U_S(s)-(1-t)p^1+s\Bigr),
\]
and
\[
\begin{aligned}
H_{p^1}(b,p^1)
=
\delta_B\Bigl[
&(1-t)\Bigl(1-(1-\kappa)F_S\!\bigl(s^*(b,p^1)\bigr)\Bigr)
\\
&+
(1-\kappa)\,
f_S\!\bigl(s^*(b,p^1)\bigr)\,
\frac{\partial s^*(b,p^1)}{\partial p^1}\,
\Bigl(U_S(s)-(1-t)p^1+s\Bigr)
\Bigr].
\end{aligned}
\]
Whenever $H_{p^1}\neq 0$, the implicit function theorem implies
\[
\frac{\partial p^1_S(s,b)}{\partial b}
=
-\,\frac{H_b\!\bigl(b,p^1_S(s,b)\bigr)}{H_{p^1}\!\bigl(b,p^1_S(s,b)\bigr)}.
\]
Assumption~\ref{ass:regularity_wow} implies %$H_{p^1}\!\bigl(b,p^1_S(s,b)\bigr)>0$ and hence
\begin{equation}\label{eq:reg_used_partition}
\Bigl(1-F_S\!\bigl(s^*(b,p^1_S(s,b))\bigr)\Bigr)
\Bigl(-\frac{\partial p^1_S(s,b)}{\partial b}\Bigr)
%=
%\Bigl(1-F_S\!\bigl(s^*(b,p^1_S(s,b))\bigr)\Bigr)
%\frac{H_b\!\bigl(b,p^1_S(s,b)\bigr)}{H_{p^1}\!\bigl(b,p^1_S(s,b)\bigr)}
<
\frac{\delta_R^2}{\delta_B}-1 .
\end{equation}

For immediate purchase,
\[
\frac{\partial V_B(s,b;A,p^0)}{\partial b}=1.
\]
For a committed counteroffer,
\[
\frac{\partial V_B(s,b;CN,p^0)}{\partial b}
=
\delta_R^2.
\]
For decline,
\[
\frac{\partial V_B(s,b;D,p^0)}{\partial b}
=
\delta_R\,\delta_B
\int
\frac{\partial V_B(s',b)}{\partial b}
\,dF_S(s')
\;\le\;
\delta_R\,\delta_B .
\]
where the inequality follows from the Lipschitz--$1$ condition in Assumption~\ref{ass:monotonicity}.

For a non-committed counteroffer, applying Leibniz' rule yields:
\[
\frac{\partial V_B(s,b;CS,p^0)}{\partial b}
=
\delta_B
\Biggl[
\int_{-\infty}^{s^*(b,p^1_S)}
\frac{\partial V_B(s',b)}{\partial b}\,dF_S(s')
+
\bigl(1-F_S(s^*(b,p^1_S))\bigr)
\Bigl(1-\frac{\partial p^1_S(s,b)}{\partial b}\Bigr)
\Biggr].
\]
By the Lipschitz--$1$ condition, $\frac{\partial V_B(s',b)}{\partial b}\le 1$ whenever the derivative exists, hence
\[
\int_{-\infty}^{s^*(b,p^1_S)}\frac{\partial V_B(s',b)}{\partial b}\,dF_S(s')
\le
F_S\!\bigl(s^*(b,p^1_S)\bigr).
\]
Therefore,
\[
\begin{aligned}
\frac{\partial V_B(s,b;CS,p^0)}{\partial b}
&\le
\delta_B\Bigl[
F_S(s^*)+(1-F_S(s^*))\Bigl(1-\frac{\partial p^1_S}{\partial b}\Bigr)
\Bigr]
\\
&=
\delta_B\Bigl[
1+(1-F_S(s^*))\Bigl(-\frac{\partial p^1_S}{\partial b}\Bigr)
\Bigr]
\\
&<
\delta_B\Bigl[1+\Bigl(\frac{\delta_R^2}{\delta_B}-1\Bigr)\Bigr]
=
\delta_R^2,
\end{aligned}
\]
where the strict inequality uses \eqref{eq:reg_used_partition}.

Given $\lambda_B<\lambda_R$, the marginal values satisfy
\[
\frac{\partial V_B(s,b;D)}{\partial b}
<
\frac{\partial V_B(s,b;CS)}{\partial b}
<
\frac{\partial V_B(s,b;CN)}{\partial b}
<
\frac{\partial V_B(s,b;A)}{\partial b}
\]
It follows that each of the differences
\[
V_B(s,b;CS)-V_B(s,b;D),\quad
V_B(s,b;CN)-V_B(s,b;CS),\quad
V_B(s,b;A)-V_B(s,b;CN)
\]
is weakly increasing in $b$, so each adjacent pair of value functions crosses at most once.
Since $\mathcal{M}(s)$ is nonempty, this single-crossing property implies that the buyer's optimal action admits a unique monotone partition of the type space, ordered as
$D \prec CS \prec CN \prec A$.

\end{proof}

\begin{proof}[Proof of Proposition~\ref{prop:p0_monotonicity}]\label{proof:p0_monotonicity}
Define
\[
\delta_R:=\frac{\lambda_R}{\lambda_R+r}
\qquad\text{and}\qquad
\delta_S:=\frac{\lambda_S}{\lambda_S+r}.
\]

Since \(\{A,CN,D\}\subset M(s)\) for all \(s\in\mathcal S\), Proposition~\ref{prop:partition} implies that buyer behavior is characterized by cutoff rules. In particular, there is a cutoff \(\tau^{AC}(p^0,s)\) such that buyers choose \(A\) if and only if \(b\ge \tau^{AC}(p^0,s)\), and a cutoff \(\tau^{CD}(s)\) such that buyers choose \(D\) rather than \(CN\) if and only if \(b<\tau^{CD}(s)\).

The seller's objective can therefore be written as
\[
\delta_S\Bigl[
\bigl(1-F_B(\tau^{AC}(p^0,s))\bigr)\bigl((1-t)p^0-s\bigr)
+\bigl(F_B(\tau^{AC}(p^0,s))-F_B(\tau^{CD}(s))\bigr)\delta_R U_S(s)
+F_B(\tau^{CD}(s))U_S(s)
\Bigr].
\]
Since \(\delta_S>0\) and \(\tau^{CD}(s)\) does not depend on \(p^0\), the seller chooses \(p^0\) to maximize
\begin{equation}\label{eq:seller_obj}
\Psi(p^0,s)
=
\bigl(1-F_B(\tau^{AC}(p^0,s))\bigr)\bigl((1-t)p^0-s\bigr)
+
F_B(\tau^{AC}(p^0,s))\,\delta_R U_S(s).
\end{equation}

I now reparametrize the seller's problem in terms of the cutoff \(\tau=\tau^{AC}(p^0,s)\). At the \(A\)--\(CN\) margin, the buyer is indifferent between accepting immediately and making a committed counteroffer, so
\[
\tau^{AC}(p^0,s)-p^0
=
\delta_R^2\bigl(\tau^{AC}(p^0,s)-p_N^1(s)\bigr).
\]
Solving for \(p^0\) yields
\begin{equation}\label{eq:p0_p1_relation}
p^0
=
(1-\delta_R^2)\tau^{AC}(p^0,s)+\delta_R^2 p_N^1(s).
\end{equation}
Moreover, from \eqref{eq:committed_offer}, the committed counteroffer satisfies
\begin{equation}\label{eq:committed_offer_delta}
    p_N^1(s)=\frac{1}{1-t}\left(s+\frac{1}{\delta_R}U_S(s)\right).
\end{equation}
Substituting \eqref{eq:p0_p1_relation} and \eqref{eq:committed_offer_delta} into \eqref{eq:seller_obj}, we obtain
\[
\Psi(p^0,s)
=
\delta_R U_S(s)
+
(1-\delta_R^2)\bigl(1-F_B(\tau^{AC}(p^0,s))\bigr)\bigl((1-t)\tau^{AC}(p^0,s)-s\bigr).
\]
Since the first term does not depend on \(p^0\) and \(1-\delta_R^2>0\), maximizing over \(p^0\) is equivalent to maximizing
\[
\tilde{\Psi}(\tau,s)
=
\bigl(1-F_B(\tau)\bigr)\bigl((1-t)\tau-s\bigr)
\]
over \(\tau\).

The first-order condition is
\[
(1-t)\bigl(1-F_B(\tau)\bigr)-f_B(\tau)\bigl((1-t)\tau-s\bigr)=0,
\]
which can be rewritten as
\[
\tau-\frac{1-F_B(\tau)}{f_B(\tau)}=\frac{s}{1-t}.
\]
Because \(F_B\) has increasing hazard rate, the inverse hazard ratio \((1-F_B(\tau))/f_B(\tau)\) is weakly decreasing, and hence left hand side is strictly increasing. Therefore the first-order condition has a unique solution, denoted \(\tau^*(s)\). Since the right-hand side is strictly increasing in \(s\), it follows that \(\tau^*(s)\) is strictly increasing in \(s\).

It remains to show that \(p^0(s)\) is strictly increasing in \(s\). By \eqref{eq:p0_p1_relation},
\[
p^0(s)=(1-\delta_R^2)\tau^*(s)+\delta_R^2 p_N^1(s),
\]
so it is enough to show that \(p_N^1(s)\) is weakly increasing.
Under Assumption~\ref{ass:monotonicity}, \(V_S(\cdot,b)\) is Lipschitz with constant one, which implies that \(U_S\) is \(\delta_S\)-Lipschitz. Hence, for any \(s_2>s_1\),
\[
U_S(s_2)-U_S(s_1)\ge -\delta_S(s_2-s_1).
\]
If \(\lambda_R\ge \lambda_S\), then \(\delta_R\ge \delta_S\), and therefore, from \eqref{eq:committed_offer_delta}, 
\[
p_N^1(s_2)-p_N^1(s_1)
=
\frac{1}{1-t}\left[(s_2-s_1)+\frac{U_S(s_2)-U_S(s_1)}{\delta_R}\right]
\ge
\frac{1-\delta_S/\delta_R}{1-t}(s_2-s_1)\ge 0.
\]
Thus \(p_N^1(s)\) is weakly increasing in \(s\). Since \(\tau^*(s)\) is strictly increasing and \(1-\delta_R^2>0\), it follows that \(p^0(s)\) is strictly increasing in \(s\). Uniqueness of \(p^0(s)\) follows from uniqueness of \(\tau^*(s)\) and the affine relation \eqref{eq:p0_p1_relation}.
\end{proof}
\begin{proof}[Proof of Proposition~\ref{prop:p1s_monotonicity}]\label{proof:p1s_monotonicity}
Since $p^1_S(s,b)$ solves the acceptance condition \eqref{eq:seller_accept_noncommit} with equality, the implicit function theorem implies
\[
\frac{\partial p^1_S}{\partial b}(s,b)
=
-\frac{H_b\!\bigl(b,p^1_S(s,b)\bigr)}{H_{p^1}\!\bigl(b,p^1_S(s,b)\bigr)}.
\]
Assumption~\ref{ass:regularity_sc} gives $H_{p^1}\!\bigl(b,p^1_S(s,b)\bigr)>0$, so it suffices to show
$H_b\!\bigl(b,p^1_S(s,b)\bigr)>0$.

By the definition of $H_b$ in Assumption~\ref{ass:regularity_wow},
\[
H_b(b,p^1)
=
\delta_B(1-\kappa)\,
f_S\!\bigl(s^*(b,p^1)\bigr)\,
\frac{\partial s^*(b,p^1)}{\partial b}\,
\Bigl(U_S(s)-(1-t)p^1+s\Bigr).
\]
The first three objects are strictly positive provided that $F_S$ has positive density, and the last are strictly negative evauated at $p^1 = p^1_S(s, b)$.
First, $\frac{\partial s^*(b,p^1)}{\partial b} < 0$ follows from Assumption~\ref{ass:monotonicity}:
since $V_B(s,b)$ is decreasing in $s$ and Lipschitz in $b$ with constant one, the cutoff defined by
$V_B(s^*(b,p^1),b)=b-p^1$ is weakly decreasing in $b$.
\eqref{eq:vbchi} further implies that as long as $\chi_B(s, b) = CS$, $\frac{\partial V_B(s, b)}{\partial b} < 1$, hence $s^*(b,p^1)$ is strictly decreasing in $b$.
Second, by \eqref{prop:commit_premium}, $(1-t)p^1_S - s > (1-t)p^1_N - s = \frac{\lambda_R + r}{\lambda_R} U_S(s) > U(s)$, hence $U_S(s)-(1-t)p^1+s < 0$.
Therefore $H_b\!\bigl(b,p^1_S(s,b)\bigr) > 0$.
\end{proof}

\subsection{Hedonic regression results}\label{sec:hedonic_regression}

This appendix reports the hedonic regression used to residualize list prices and offers. The specification controls for observable item characteristics and shipping attributes. The fitted values are used to construct residualized prices and offers that enter the main analysis.
\input{tables/table_hedonic}

\subsection{Estimation Details}

\subsubsection{Recovering unconditional choice probabilities.}\label{sec:selection}
I denote $\tilde{\lambda}_S(s)$ as the arrival rate of an observable match to seller $s$.
I estimate $\tilde{\lambda}_S(s)$ based on the days $T^{\mathrm{obs}}$ between consecutive observed arrivals for a given listing. Then $\tilde{\lambda}_S(s)$ is identified:
\[
\tilde{\lambda}_S(s) = \frac{1}{\mathbb{E}[T^{obs}|s]}
\]
To avoid numerical instability due to inversion, I first estimate $\mathbb{E}[\log T^{\mathrm{obs}}\mid s]$ after dropping very short intervals (less than ten minutes), and apply Duan's smearing correction to recover an estimate of $\mathbb{E}[T^{\mathrm{obs}}\mid s]$.
I then define $q(s)$ as the probability a match is observed.
Then we have:
\[
\tilde{\lambda}_S(s) = q(s) \lambda_S
\]
I recover $q(s)$ from this relationship.

I denote $p_{\chi}(s)=\mathbb{P}(\chi_B(s,b)=\chi \mid s)$ for $\chi\in\{A,CN,CS,D\}$. Let $\tilde{p}_{\chi}(s)$ be the probability that action $\chi$ is taken conditional on an observable match. Then, because $\chi\in\{A,CN,CS\}$ is always observable,
\[
p_{\chi}(s)=q(s)\tilde{p}_{\chi}(s),
\qquad \chi\in\{A,CN,CS\}.
\]
Finally, the unconditional decline probability is obtained as the residual
\[
p_D(s)=1-p_A(s)-p_{CN}(s)-p_{CS}(s).
\]

\clearpage

\subsubsection{Sensitivity Analysis}\label{sec:sensitivity}
\input{tables/table_estimation_summary_panelc_only}

\clearpage

\section*{Online Appendix (Not for Print)}
\addcontentsline{toc}{section}{Online Appendix}

\input{draft/online_appendix}

\end{document}

%% file: figures/sequence_tree.tex
\begin{figure}[htbp]
\centering
\begin{forest}
for tree={
  grow'=south,
  parent anchor=south,
  child anchor=north,
  edge={->},
  align=center,
  font=\small,
  inner sep=2pt,
  s sep=10mm,
  l sep=8mm,
}
[
  {$\text{Seller } s$\\ $p^0$}
  [
    {$\text{Buyer } b$}
    [
      {$A$}
      [{$(b-p^0,\ (1-t)p^0-s)$}]
    ]
    [
      {$D$}
      [{$(U_B(b),\ U_S(s))$}]
    ]
    [
      {$C:\ p^1$}
      [
        {$\text{Buyer } b$}
        [
          {$S$}
          [
            {$\text{Seller } s$}
            [
              {$D$}
              [{$(U_B(b),\ U_S(s))$}]
            ]
            [
              {$A$}
              [
                {$\text{Buyer } b$}
                [
                  {$W$}
                  [{$\bigl(V_B(s',b),\ \kappa((1-t)p^1-s)+(1-\kappa)U_S(s)\bigr)$}]
                ]
                [
                  {$P$}
                  [{$(b-p^1,\ (1-t)p^1-s)$}]
                ]
              ]
            ]
          ]
        ]
        [
          {$N$}
          [
            {$\text{Seller } s$}
            [
              {$D$}
              [{$(U_B(b),\ U_S(s))$}]
            ]
            [
              {$A$}
              [{$(b-p^1,\ (1-t)p^1-s)$}]
            ]
          ]
        ]
      ]
    ]
  ]
]
\end{forest}
\caption{Sequence of events}
\label{fig:sequence_tree}

\begin{flushleft}
\footnotesize
\textit{Note:} Decision nodes are labeled by the acting player. Terminal nodes report the buyer payoff first and the seller payoff second, both undiscounted. Stochastic arrival processes and valuation draws by Nature are suppressed.
\end{flushleft}
\end{figure}

%% file: tables/item_summary_stats.tex
\begin{table}[htbp]
\centering
\caption{Item-level summary statistics}
\label{tab:item_summary_stats}
\begin{tabular}{lllll}
\toprule
                              & Mean     & SD      & Min     & Max      \\
\midrule
 List price                   & 14493.63 & 2817.51 & 7215.07 & 23193.62 \\
 Sold                         & 0.99     & 0.09    & 0.00    & 1.00     \\
 Sold with bargaining         & 0.20     & 0.40    & 0.00    & 1.00     \\
 Received an offer            & 0.39     & 0.49    & 0.00    & 1.00     \\
 Number of observable matches & 1.57     & 1.07    & 0.00    & 16.00    \\
 Sale price                   & 13995.68 & 2772.25 & 2561.87 & 23175.51 \\
 Sale price / list price      & 0.97     & 0.06    & 0.17    & 1.00     \\
 Time to sell (days)          & 19.38    & 50.09   & 0.00    & 1428.64  \\
 Time to first match (days)   & 15.95    & 40.06   & 0.00    & 373.91   \\
 Observations                 & 12451    &         &         &          \\
\bottomrule
\end{tabular}
\vspace{0.5em}
\begin{minipage}{0.9\textwidth}
\footnotesize
\textit{Notes:} The unit of observation is a listing. List and sales prices are residualized using the hedonic regression described in the main text. An observable match is defined as the arrival of a distinct buyer through a non seller comment or a purchase. Sold with bargaining indicates items that sell in a bargaining match (C) without walkaway. Received an offer indicates items with at least one observed offer. Time to sell is reported only for sold items. Time to first match is reported only for items with at least one buyer match.
\end{minipage}
\end{table}

%% file: tables/match_summary_stats.tex
\begin{table}[htbp]
\centering
\caption{Observable match-level summary statistics}
\label{tab:match_summary_stats}
\begin{tabular}{lllll}
\toprule
                                             & Mean   & SD     & Min   & Max     \\
\midrule
 Purchased at list price (A)                 & 0.473  & 0.499  & 0.000 & 1.000   \\
 An offer was made and accepted (C)          & 0.200  & 0.400  & 0.000 & 1.000   \\
 Neither of the two above (D)                & 0.327  & 0.469  & 0.000 & 1.000   \\
 An offer was made                           & 0.320  & 0.466  & 0.000 & 1.000   \\
 An offer was accepted but not purchased (W) & 0.374  & 0.484  & 0.000 & 1.000   \\
 Time to purchase after acceptance (days)    & 3.208  & 19.400 & 0.000 & 647.701 \\
 Observations                                & 19525  &        &       &         \\
\bottomrule
\end{tabular}
\vspace{0.5em}
\begin{minipage}{0.9\textwidth}
\footnotesize
\textit{Notes:} The unit of observation is an observable match. An observable match is defined as the arrival of a distinct buyer through a non seller comment or a purchase. Offer and list prices are residualized using the hedonic regression described in the text. Acceptance (C) refers to matches in which an offer is accepted by the seller. Walkaway (W) is defined only for accepted offer matches (C) and indicates that the original offer maker does not complete the purchase. Time to purchase after acceptance is measured for accepted offer matches (C) as the time from the acceptance event to the first subsequent sale of the item, regardless of whether the original buyer completes the purchase or a different buyer does so in a later match.
\end{minipage}
\end{table}

%% file: tables/commitment_walkaway_table.tex
\begin{table}[htbp]
\centering
\caption{Commitment Premium, Walkaway, and Time to Purchase}
\label{tab:commitment_walkaway_time}
\begin{tabular}{lccc}
\toprule
 & Accepted & Walkaway & $\log(1+\mathrm{time\ to\ purchase})$ \\
\midrule
Constant & 0.482 & 0.348 & 0.476 \\
 & (0.009) & (0.013) & (0.023) \\
Discount & -1.365 & 0.967 & 2.231 \\
 & (0.075) & (0.126) & (0.228) \\
Commit & 0.268 & -0.089 & -0.162 \\
 & (0.014) & (0.016) & (0.028) \\
Observations & 5913 & 3901 & 3880 \\
$R^2$ & 0.114 & 0.023 & 0.032 \\
Adjusted $R^2$ & 0.114 & 0.022 & 0.031 \\
\bottomrule
\end{tabular}
\vspace{0.5em}
\begin{minipage}{0.9\textwidth}
\footnotesize
\textit{Notes:} The unit of observation is an observable match, using the first offer in each match. Column~1 is a linear probability model for acceptance. The sample consists of observable matches in which the buyer makes an offer (C or D). Column~2 is a linear probability model for walkaway. The sample consists of accepted offer matches (C only). Column~3 regresses $\log(1+\mathrm{time\ to\ purchase})$, where time is measured in days. The sample consists of accepted offer matches (C) that are eventually purchased. The discount is defined as $(p_0-p_o)/p_0$, where $p_0$ is the residualized list price and $p_o$ is the residualized offer amount.
\end{minipage}
\end{table}

%% file: tables/table_estimation_bootstrap_se_single_r.tex
% auto-generated
\begin{table}[!htbp]
\centering
\caption{Structural estimates with bootstrap standard errors (\(r=0.05\))}
\label{tab:estimation_bootstrap_se_single_r}
\begin{threeparttable}
\begin{tabular}{lccc}
\toprule
\multicolumn{4}{l}{\textbf{Panel A: Calibrated or directly observed}} \\
\addlinespace[0.3em]
 & $N_S$ & $N_B$ & $\kappa$ \\
\cmidrule(lr){2-4}
 & \shortstack[c]{11983 \\ (33.143)} & \shortstack[c]{9650 \\ (33.240)} & \shortstack[c]{0.687 \\ (0.000)} \\
\addlinespace[0.6em]
\cmidrule(lr){1-4}
\addlinespace[0.4em]
\multicolumn{4}{l}{\textbf{Panel B: Arrival rates}} \\
\addlinespace[0.3em]
 & $\lambda_R$ & $\lambda_S$ & $\lambda_B$ \\
\cmidrule(lr){2-4}
 & \shortstack[c]{3.402 \\ (0.328)} & \shortstack[c]{0.642 \\ (0.007)} & \shortstack[c]{0.797 \\ (0.009)} \\
\addlinespace[0.6em]
\cmidrule(lr){1-4}
\addlinespace[0.4em]
\multicolumn{4}{l}{\textbf{Panel C: Valuation Distributions and Search Cost}} \\
\addlinespace[0.3em]
 & $c$ & $F_S\ (Q1, Q2, Q3)$ & $F_B\ (Q1, Q2, Q3)$ \\
\cmidrule(lr){2-4}
 & \shortstack[c]{819.989 \\ (441.592)} & \shortstack[c]{Q1: 7567.56 (205.56) \\ Q2: 8737.48 (206.19) \\ Q3: 9851.36 (254.34)} & \shortstack[c]{Q1: -25133.40 (20266.28) \\ Q2: 11099.16 (12213.27) \\ Q3: 52215.90 (6071.48)} \\
\bottomrule
\end{tabular}
\begin{tablenotes}[flushleft]
\footnotesize
\item Notes: Each cell reports the point estimate; bootstrap standard errors are in parentheses. 
Bootstrap samples are constructed by resampling items with replacement (item-level bootstrap), with all associated matches and timelines for a given item kept together. 
The number of bootstrap draws is \(B=1000\).
\end{tablenotes}
\end{threeparttable}
\end{table}

%% file: tables/welfare_unified_r0p05.tex
\begin{table}[t]
    \centering
    \begin{threeparttable}
    \begin{tabular}{lccc}
        \toprule
        & \multicolumn{1}{c}{Baseline} & \multicolumn{1}{c}{Full commitment} & \multicolumn{1}{c}{Difference} \\
        \midrule
        Total welfare & 16274.28 & 16411.92 & 137.65 \\
        \addlinespace[0.3em]
        \textbf{Sellers} & & & \\
        \quad Overall & 3020.66 & 3287.44 & 266.78 \\
        \quad Q1 & 3313.49 & 3534.77 & 221.29 \\
        \quad Q2 & 3097.27 & 3346.15 & 248.88 \\
        \quad Q3 & 2728.53 & 3108.14 & 379.62 \\
        \quad Q4 & 2381.52 & 2722.78 & 341.27 \\
        \addlinespace[0.3em]
        \textbf{Buyers} & & & \\
        \quad Overall & 14767.33 & 14572.61 & -194.72 \\
        \quad Q1 & -11249.97 & -11311.08 & -61.11 \\
        \quad Q2 & -172.10 & -353.32 & -181.22 \\
        \quad Q3 & 36190.29 & 35896.37 & -293.92 \\
        \quad Q4 & 87185.97 & 86860.95 & -325.02 \\
        \addlinespace[0.3em]
        \textbf{Platform} & & & \\
        \quad Overall & 1361.38 & 1389.05 & 27.67 \\
        \bottomrule
    \end{tabular}
    \caption{Welfare under baseline and full commitment}
    \label{tab:welfare_unified_r0p05}
    \begin{tablenotes}[flushleft]
    \footnotesize
    \item Notes: Welfare is measured on a per listing basis. Seller welfare is given by \(U_S(s)\), buyer welfare by \(U_B(b)\), and platform welfare by the scalar continuation value \(U_P\). Overall welfare is defined in \eqref{eq:overall_welfare}.
    Rows labeled ``Overall'' report welfare averaged over the full seller and buyer valuation distributions.
    Rows labeled \(Qk\) report welfare averaged over agents whose valuations lie in the \(k\)-th quartile of the relevant distribution, holding the distribution of the opposing side fixed.
    Differences are computed as Full commitment minus Baseline.
    \end{tablenotes}
    \end{threeparttable}
\end{table}

%% file: tables/action_p0_p1_baseline_diff_by_s_quartile_r0p05.tex
\begin{table}[t]
    \centering
    \begin{threeparttable}
    \begin{tabular}{lccccc ccccc}
        \toprule
        & \multicolumn{5}{c}{Baseline} & \multicolumn{5}{c}{Difference} \\
        \cmidrule(lr){2-6} \cmidrule(lr){7-11}
        & A & C & D & \(p^0\) & \(p^{1}\) & \(\Delta A\) & \(\Delta C\) & \(\Delta D\) & \(\Delta p^0\) & \(\Delta p^{1}\) \\
        \midrule
        Overall & 0.231 & 0.270 & 0.498 & 14492.5 & 13554.2 & 0.019 & -0.092 & 0.073 & 281.4 & -353.5 \\
        Q1 & 0.251 & 0.256 & 0.493 & 12804.0 & 11497.7 & 0.003 & -0.023 & 0.020 & 160.8 & 114.6 \\
        Q2 & 0.220 & 0.280 & 0.500 & 14311.6 & 13101.0 & 0.059 & -0.089 & 0.030 & 60.3 & 196.6 \\
        Q3 & 0.247 & 0.253 & 0.500 & 14773.9 & 13934.0 & -0.004 & -0.068 & 0.072 & 462.3 & 84.8 \\
        Q4 & 0.208 & 0.292 & 0.500 & 16080.4 & 15460.2 & 0.018 & -0.188 & 0.171 & 442.2 & -334.6 \\
        \bottomrule
    \end{tabular}
    \caption{Action shares and prices under baseline, with changes under full commitment}
    \label{tab:action_p0_p1_baseline_diff_r0p05}
    \begin{tablenotes}[flushleft]
    \footnotesize
    \item Notes: Rows report means overall and by quartiles of the seller valuation \(s\).
    Action shares are averaged over all buyer types \(b\), with \(C\) pooling $CN$ and $CS$.
    \(p^0\) denotes the equilibrium list price. For counteroffers, \(p^{1}\) pools \(p^1_N(s)\) for $CN$ outcomes and \(p^1_S(s,b)\) for $CS$ outcomes.
    Difference columns report Full Commitment minus Baseline.
    \end{tablenotes}
    \end{threeparttable}
\end{table}

%% file: tables/table_hedonic.tex
\begin{table}[!htbp] \centering
  \caption{Hedonic price regression}
\begin{tabular}{@{\extracolsep{5pt}}lc}
\\[-1.8ex]\toprule \\[-1.8ex]
\\[-1.8ex] & \multicolumn{1}{c}{List price}  \\
\\[-1.8ex] & (1) \\
\midrule \\[-1.8ex]
 Constant & 14584.538 \\
& (454.439) \\
 Condition: Some scratches & -1540.795 \\
& (71.508) \\
 Condition: Scratched & -3437.442 \\
& (87.886) \\
 Shipping payer: Seller pays shipping & -1665.865 \\
& (548.765) \\
 Shipping duration: Ships in 2--3 days & -127.359 \\
& (75.329) \\
 Shipping duration: Ships in 4--7 days & -156.073 \\
& (132.291) \\
\\[-1.8ex]
 Observations & 13391 \\
 $R^2$ & 0.117 \\
 Adjusted $R^2$ & 0.113 \\
 Residual Std. Error & 3663.424 (df=13324) \\
 F Statistic & 26.786 (df=66; 13324) \\
\bottomrule \\[-1.8ex]
\addlinespace
\multicolumn{2}{p{0.95\textwidth}}{\footnotesize \textit{Notes:} The specification includes prefecture fixed effects and shipping-method fixed effects. The intercept corresponds to the reference categories for all covariates (no visible scratches, buyer paid shipping, shipping within 1--2 days, the platform default shipping option, and the baseline prefecture).} \\
\end{tabular}
\end{table}

%% file: tables/table_estimation_summary_panelc_only.tex
% auto-generated
\begin{table}[!htbp]
\centering
\caption{Structural parameters varying with discount rates}
\label{tab:estimation_summary_panelc_only}
\begin{threeparttable}
\begin{tabular}{lccccccc}
\toprule
 & & \multicolumn{3}{c}{$F_S$} & \multicolumn{3}{c}{$F_B$} \\
\cmidrule(lr){3-5}\cmidrule(lr){6-8}
$r$ & $c$ & Q1 & Q2 & Q3 & Q1 & Q2 & Q3 \\
\addlinespace[0.2em]
0.020 & 728.760 & 5643.77 & 6645.13 & 7263.40 & -64536.50 & 18762.92 & 113291.17 \\
0.050 & 819.989 & 7567.56 & 8737.48 & 9851.36 & -25133.40 & 11099.16 & 52215.90 \\
0.070 & 457.078 & 8004.47 & 9197.49 & 10389.48 & -17659.87 & 9620.86 & 40579.05 \\
0.120 & 284.943 & 8523.38 & 9734.55 & 11001.46 & -9918.86 & 8055.64 & 28453.11 \\
0.140 & 240.449 & 8639.82 & 9853.88 & 11135.55 & -8382.19 & 7735.98 & 26026.89 \\
0.210 & 181.011 & 8893.42 & 10112.91 & 11424.39 & -5331.59 & 7083.82 & 21172.83 \\
0.290 & 273.152 & 9050.24 & 10272.73 & 11602.15 & -3670.09 & 6712.03 & 18493.66 \\
\bottomrule
\end{tabular}
\begin{tablenotes}[flushleft]
\footnotesize
\item Notes: This table reports only the structural parameters that vary with the discount rate \(r\). 
The column labeled \(c\) reports the estimated flow search cost. Columns labeled \(F_S\) and \(F_B\) report the first, second, and third quartiles of the seller and buyer valuation distributions, respectively.
\end{tablenotes}
\end{threeparttable}
\end{table}

%% file: draft/online_appendix.tex
\section{Dataset Construction}\label{sec:dataset_construction}

\subsection{Selection of Subsample}

Mercari provided data covering over 400 million items listed during the year from 2019 to 2021.
I first restrict attention to items listed in the
"Smartphones" category within "Mobile Phones,"
which is nested under "Home Appliances, Smartphones, and Cameras."
This yields 763{,}079 items.

I then restrict the sample to iPhones.
Items are classified as iPhones if either the brand name is recorded as Apple or,
because the brand field is frequently missing,
the item title or description contains the string "iPhone,"
allowing for common orthographic variants.
This restriction leaves 414{,}363 items.

Next, I infer the iPhone model from the item title or description text.
iPhone~7 is the most frequently traded model in the data,
resulting in a subsample of 61{,}132 items.
I similarly infer storage capacity from the title or description.
Among iPhone~7 listings, 128GB is the most common storage size,
with 32{,}030 items.

I then examine the distribution of list prices by item condition
within the iPhone~7 128GB subsample.
As shown in Figure~\ref{fig:price_dist},
the price distributions for conditions "No noticeable damage", "Some damage", and "Damaged" are similar to each other
and clearly distinct from the remaining conditions.
I therefore restrict attention to items in these three item conditions for the main analysis, leaving me with 28,230 items.

\begin{figure}[t]
    \centering
    \includegraphics[width=0.9\linewidth]{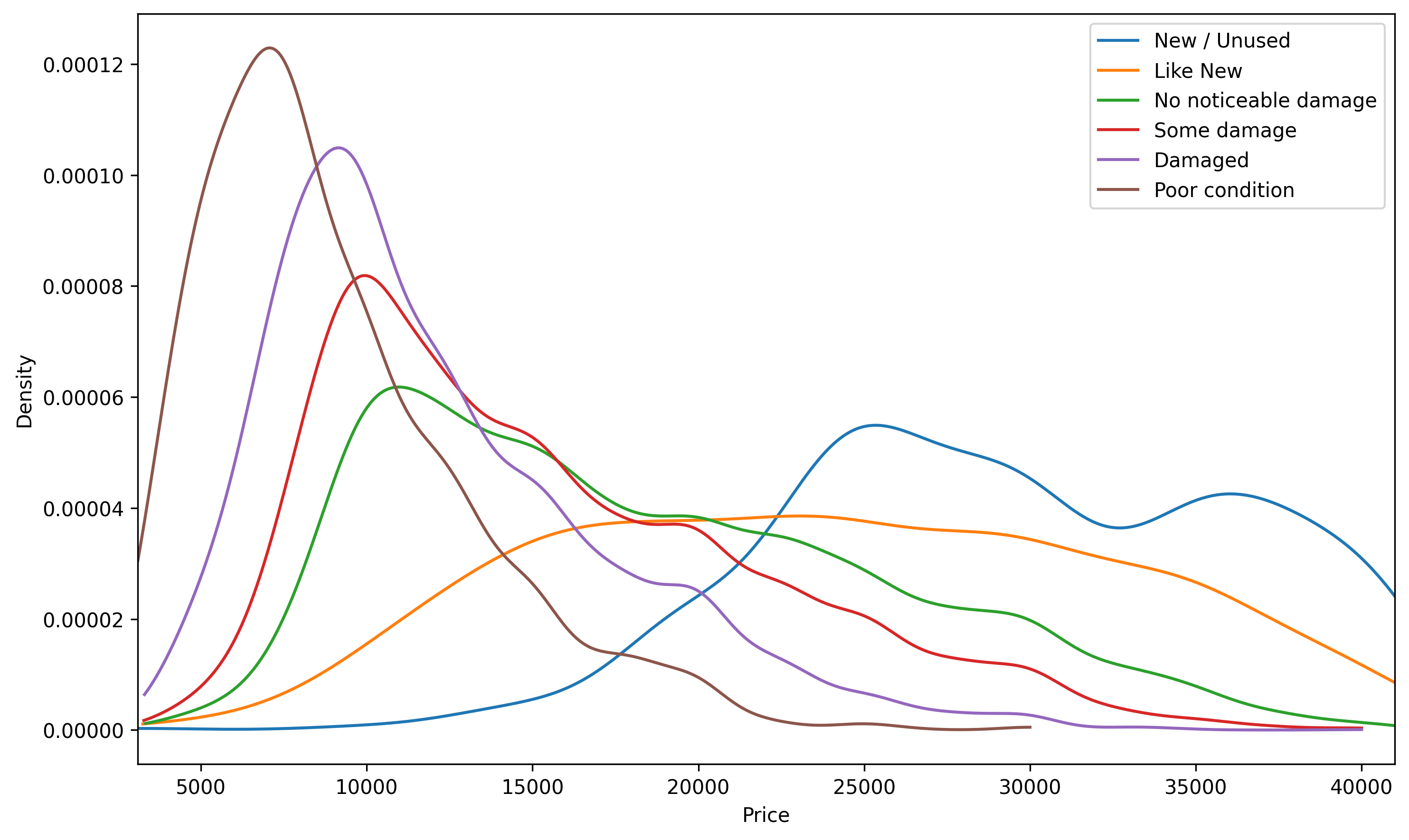}
    \caption{Distribution of prices by item condition}
    \label{fig:price_dist}
    \vspace{0.4em}
    \begin{minipage}{0.9\textwidth}
    \footnotesize
    \textit{Notes:} The figure plots kernel density estimates of listing prices for the iPhone~7 128GB subsample, separately by the platform reported item condition. For visual clarity, prices are trimmed to the 1st and 99th percentiles of the price distribution within this subsample prior to density estimation.
    \end{minipage}
\end{figure}

I further restrict the sample to items listed on or after June 24, 2020, when price change records begin to be consistently available, leaving 13,397 items.

\subsection{Cleansing of the Dataset}

I extract offer amounts from comments as follows. For each comment, I first normalize numeric expressions and parse them into numbers. I then retain values between 5,000 and 30,000 and exclude any values that coincide with previously observed list prices, price changes, or earlier offers for the same item. If multiple candidates remain, I select one as the offer amount.

The seller can change its list price even without receiving an offer from a buyer. I therefore treat the price after any price changes that occurred before any offer from a buyer as the effective list price. I define a match as the first arrival of a distinct buyer, identified by a non-seller comment or purchase. An item is flagged as having multiple-match buyers if a buyer reappears after another buyer has arrived. %This occurs for 8.5\% (1136/13397) of items.

I label seller and buyer actions at the match level as follows. An offer is classified as accepted if the seller subsequently updates the posted price to exactly the offer amount after the offer is made. Using this information, each buyer-item match is assigned an action type: $C$ if the match includes an accepted offer, $A$ if the item is sold without any offer from that buyer, and $D$ otherwise. For matches of type $C$, I further classify buyer behavior by defining a walkaway indicator equal to one if the offering buyer does not complete the purchase. Finally, I label buyer commitment by setting a commit indicator equal to one if any comment within a $C$-type match contains keywords indicating an intention to complete the transaction promptly. The list of keywords is shown in Table \ref{tab:commit_words}.

\input{tables/commit_words}

The analysis starts from a sample of 13,397 items. I first drop 6 items with missing residualized list prices, leaving 13,391 items. I then exclude 204 items for which either the residualized offer amount or the residualized sales price exceeds the residualized list price, reducing the sample to 13,187 items. Finally, I trim 736 items that fall outside the 1st - 99th percentile range of the residualized list price and the time to first match, yielding a final analysis sample of 12,451 items. In total, 946 items are removed across all trimming steps.

\section{Simulation of the Equilibrium}\label{sec:vfi}

I compute the stationary equilibrium by value function iteration on discretized state spaces, taking the estimated primitives and calibrated parameters as given.

\paragraph{Discretization.}
I use \(100\) grid points each for seller values \(s\) and buyer values \(b\), constructed by drawing from the estimated distributions and sorting the draws.
I discretize feasible prices on a uniform grid with \(200\) points over \([0,100000]\).
I fix \((t,r,c,\lambda_S,\lambda_B,\lambda_R,\kappa)\) and \((N_S,N_B)\) at their estimated or calibrated values.
In the full commitment counterfactual, I set \(c=\infty\) and hold all other primitives fixed.
I initialize \(U_S(s)\), \(U_B(b)\), and \(V_B(s,b)\).

\paragraph{Policy updates given \((U_S,U_B,V_B)\).}
For each iteration, I update the equilibrium policies implied by the current value functions.

First, I compute the committed offer \(p^1_N(s)\) from \eqref{eq:committed_offer}.
Second, I compute the non-committed offer \(p^1_S(s,b)\) by searching over the discretized price grid for the smallest \(p^1\) that satisfies the seller acceptance condition \eqref{eq:seller_accept_noncommit}.
This step uses the walkaway cutoff \(s^*(b,p^1)\) defined by \(V_B(s^*, b) = b - p^1\) and the induced walkaway probability \(F_S(s^*(b, p^1))\).

Given \(\{p^1_N,p^1_S\}\), for each candidate list price \(p^0\) I evaluate buyer continuation values for \(\chi\in\{A,CN,CS,D\}\) using the payoff expressions in \eqref{eq:vbchi} and choose the maximizing action \(\chi_B(s,b;p^0)\).
Finally, I compute the seller optimal list price \(p^0(s)\) by maximizing the objective in \eqref{eq:seller_maximization}, integrating over the \(100\) point buyer grid and the induced buyer actions.

\paragraph{Value updates, stabilization, and convergence.}
Using the updated policies, I update \(U_S\), \(U_B\), and \(V_B\) using their definitions in Section~\ref{sec:model}.
For numerical stability, I smooth the updated value functions at each iteration using a Gaussian kernel with bandwidth 5 on the discretized state grids.
I iterate until the relative change in both \(U_S\) and \(U_B\) falls below \(10^{-3}\).

%\end{document}

%% file: tables/commit_words.tex
\begin{table}[t]
\centering
\caption{Commitment-related expressions}
\label{tab:commit_words}
\begin{tabular}{llr}
\toprule
Japanese & English meaning & Count \\
\midrule
sokketsu & Immediate decision / buy outright & 625 \\
soku kounyuu & Buy immediately & 520 \\
sugu & Immediately / right away & 353 \\
shiharai & Payment & 147 \\
oshiharai & Payment & 60 \\
hayame & Early / sooner & 49 \\
honjitsu chuu & By today & 42 \\
nyuukin & Payment made / remittance & 39 \\
hayaku & Quickly & 32 \\
kyou chuu & Within today & 20 \\
sugu ni & Right away & 19 \\
isogi & Urgent & 17 \\
gozen chuu & By the morning & 15 \\
sokujitsu & Same day & 13 \\
kanarazu & Definitely / for sure & 13 \\
isoide & In a hurry & 7 \\
ashita chuu & By tomorrow & 7 \\
kettei & Decision & 6 \\
saitan & As soon as possible & 5 \\
kimeru & Decide & 2 \\
\bottomrule
\end{tabular}
\vspace{0.5em}
\begin{minipage}{0.9\textwidth}
\footnotesize
\textit{Notes:} The table reports the number of accepted-offer matches (action $C$) in which each expression appears in the negotiation comments. The unit of observation is an observable match, defined as the arrival of a distinct buyer through a non seller comment or a purchase. Each expression is counted at most once per match, regardless of how many times it appears within the same negotiation.
\end{minipage}
\end{table}